\newif\ifprocs
\procstrue
\procsfalse 

\ifprocs
\documentclass[twoside,leqno,twocolumn]{article}
\usepackage{ltexpprt}
\else
\documentclass[11pt]{article}
\fi

\ifprocs
\usepackage[fleqn]{amsmath}
\else
\usepackage{amsthm}
\usepackage{amsmath}
\fi
\usepackage{amssymb,sectsty,url}
\usepackage[letterpaper,hmargin=1.0in,vmargin=1.0in]{geometry}


\usepackage{ifpdf}
\ifpdf    
\usepackage[pdftex,colorlinks,linkcolor=blue,citecolor=blue,filecolor=blue,urlcolor=blue]{hyperref}
\else    
\usepackage[hypertex]{hyperref}
\fi

\usepackage{cleveref}
\usepackage{color}
\usepackage[boxed]{algorithm}
\pagestyle{plain}

\usepackage{tikz}
\usepackage{graphicx}
\usepackage{caption}
\usepackage{subcaption}
\usepackage{nicefrac}
\usepackage{aliascnt}
\usepackage[textsize=scriptsize]{todonotes}

\ifprocs
\crefname{@theorem}{Theorem}{Theorems}
\crefname{lemma}{Lemma}{Lemmas}
\crefname{proposition}{Proposition}{Propositions}
\crefname{corollary}{Corollary}{Corollaries}
\crefname{fact}{Fact}{Facts}

\newtheorem{definition}{Definition}[section]
\crefname{definition}{Definition}{Definitions}

\newtheorem{observation}{Observation}[section]
\crefname{observation}{Observation}{Observations}

\newtheorem{claim}{Claim}[section]
\crefname{claim}{Claim}{Claims}

\else

\newtheorem{theorem}{Theorem}[section]
\crefname{theorem}{Theorem}{Theorems}

\newaliascnt{lemma}{theorem}
\newtheorem{lemma}[lemma]{Lemma}
\aliascntresetthe{lemma}
\crefname{lemma}{Lemma}{Lemmas}

\newaliascnt{proposition}{theorem}

\aliascntresetthe{proposition}
\crefname{proposition}{Proposition}{Propositions}

\newaliascnt{corollary}{theorem}
\newtheorem{corollary}[corollary]{Corollary}
\aliascntresetthe{corollary}
\crefname{corollary}{Corollary}{Corollaries}

\newaliascnt{fact}{theorem}
\newtheorem{fact}[fact]{Fact}
\aliascntresetthe{fact}
\crefname{fact}{Fact}{Facts}

\newaliascnt{definition}{theorem}
\newtheorem{definition}[definition]{Definition}
\aliascntresetthe{definition}
\crefname{definition}{Definition}{Definitions}

\newaliascnt{remark}{theorem}

\aliascntresetthe{remark}
\crefname{remark}{Remark}{Remarks}

\newaliascnt{conjecture}{theorem}

\aliascntresetthe{conjecture}
\crefname{conjecture}{Conjecture}{Conjectures}

\newaliascnt{claim}{theorem}
\newtheorem{claim}[claim]{Claim}
\aliascntresetthe{claim}
\crefname{claim}{Claim}{Claims}

\newaliascnt{question}{theorem}

\aliascntresetthe{question}
\crefname{question}{Question}{Questions}

\newaliascnt{exercise}{theorem}

\aliascntresetthe{exercise}
\crefname{exercise}{Exercise}{Exercises}

\newaliascnt{example}{theorem}

\aliascntresetthe{example}
\crefname{example}{Example}{Examples}

\newaliascnt{notation}{theorem}

\aliascntresetthe{notation}
\crefname{notation}{Notation}{Notations}

\newaliascnt{problem}{theorem}

\aliascntresetthe{problem}
\crefname{problem}{Problem}{Problems}

\newaliascnt{construction}{theorem}

\aliascntresetthe{construction}
\crefname{construction}{Construction}{Constructions}

\newaliascnt{observation}{theorem}
\newtheorem{observation}[observation]{Observation}
\aliascntresetthe{observation}
\crefname{observation}{Observation}{Observations}
\fi

\ifprocs
\newcommand{\qedsymbol}{$\square$}
\fi

 
\newcommand{\norm}[1]{\lVert#1\rVert}




\newcommand{\R}{\mathbb R}
\newcommand{\Z}{\mathbb Z}

\newcommand{\B}{\{ 0,1 \}}


\begin{document}

\ifprocs
\else
\begin{titlepage}
\fi

\title{Near-Optimal (Euclidean) Metric Compression}
\date{}

\ifprocs
\author{
    Piotr Indyk\thanks{CSAIL MIT.}
  \and
    Tal Wagner\thanks{CSAIL MIT.}
}
\else
\author{
    Piotr Indyk\footnotemark[\value{footnote}]\thanks{\texttt{indyk@mit.edu}.}\\
    MIT
  \and
    Tal Wagner\thanks{\texttt{talw@mit.edu}.}\\
    MIT
}
\clearpage
\fi

\maketitle

\ifprocs
\else
\thispagestyle{empty}
\fi

\begin{abstract}
The metric sketching problem is defined as follows. Given a metric on $n$  points, and $\epsilon>0$, we wish to produce a small size data structure (sketch) that, given any pair of point indices, recovers the distance between the points up to a $1+\epsilon$ distortion. In this paper we consider metrics induced by $\ell_2$ and $\ell_1$ norms whose spread (the ratio of the diameter to the closest pair distance) is bounded by $\Phi>0$. A well-known dimensionality reduction theorem due to Johnson and Lindenstrauss yields a sketch of size $O(\epsilon^{-2} \log (\Phi n) n\log n)$, i.e., 
 $O(\epsilon^{-2}  \log (\Phi n) \log n)$ bits per point. We show that this bound is not optimal, and can be substantially improved to 
$O(\epsilon^{-2}\log(1/\epsilon) \cdot \log n + \log\log \Phi)$ bits per point. Furthermore, we show that our bound is tight up to a factor of $\log(1/\epsilon)$. 

We also consider sketching of general metrics and provide a sketch of size $O(n\log(1/\epsilon)+ \log\log \Phi)$ bits per point, which we show is optimal. 

\end{abstract}

\ifprocs
\else
\end{titlepage}
\fi

\section{Introduction}

Compact representations (or sketches) of high-dimensional data are very useful tools for applications involving data storage, processing and analysis. A prototypical example of this approach is the Johnson-Lindenstrauss theorem~\cite{johnson1984extensions}, which states that any set of $n$ points in the Euclidean space of arbitrary dimension can be mapped into a space of dimension $O(\epsilon^{-2}\log n)$ such that the distances between any pair of points are preserved up to a factor of $1+\epsilon$. The theorem makes it possible to represent each point as a sequence of  $O(\epsilon^{-2}\log n)$ {\em numbers}. The number of {\em bits} needed to represent a point depends on the precision of the underlying data set. If the coordinates of the high-dimensional points come from the range $\{-\Phi \ldots \Phi\}$, and if one uses the ``binary'' variant of the Johnson-Lindenstrauss theorem due to~\cite{achlioptas2003}, then each coordinate of  the projected points can be represented using $O(\log (n\Phi))$ bits. This yields $O( \epsilon^{-2} \log (n) \log (n\Phi))$ bits per point. 

Perhaps surprisingly, it is not known whether this bound is tight
for metric sketching,
or whether the sketch length can be improved much further. This could be in part because the Johnson-Lindenstrauss theorem itself was not known to be optimal until very recently~\cite{LarsenN16} (although it was known to be optimal up to a factor of $\log(1/\epsilon)$~\cite{alon2003jllowerbound}). Still, even an optimal lower bound for dimensionality reduction does not imply a corresponding bound on the sketch length. Another set of related results are the lower bounds from~\cite{jayram2013optimal,molinaro2013beating}, which show that 
$\Omega( \epsilon^{-2} n \log (n/\delta) \log \Phi)$ bits are needed to sketch the $\ell_1$ or $\ell_2$ distances between two sets of $n$ points with probability $1-\delta$. The latter lower bounds match the ``discretized Johnson-Lindenstrauss'' upper bound outlined above. 
However, they apply only to the {\em one-way communication} variant of the problem, where each point is held by either Alice or Bob, who  communicate in order to estimate the distances across the partition. In contrast, in our setting, a single party holds and compresses the {\em whole} data set (see \Cref{sec:preliminaries} for the formal definition). Thus the lower bounds of \cite{jayram2013optimal,molinaro2013beating} are also inapplicable in our context. 

\paragraph{Our results.} In this paper we show that the ``discretized Johnson-Lindenstrauss'' sketching bound is {\em not} tight. In particular, we describe a new randomized sketching algorithm for $n$ points that enables estimating the distances up to a factor of $1 +\epsilon$ using at most $O(\epsilon^{-2}\log(1/\epsilon) \log n + \log\log \Phi)$ bits per point. This substantially improves  over the earlier bound, replacing the $\log (n\Phi)$ term by $\log(1/\epsilon)$, and exponentially reducing the dependence on the precision parameter $\Phi$. Furthermore, we show a lower bound of  $O(\epsilon^{-2}  \log n + \log\log \Phi)$, which implies that the bounds are tight  up to a factor of  $\log(1/\epsilon)$. This result, as well as all results that follow, extend to the setting where the coordinates of the points are arbitrary real numbers but the spread of the pointset (the ratio of its diameter to the closest pair distance) is at most $\Phi$.

The result for the Euclidean metric is a corollary of a more general result about sketching of {\em arbitrary} $\ell_p$ norms.  Specifically, we show that any metric induced by a set of $n$ points in a $d$-dimensional space under an $\ell_p$ norm can be sketched using $O((d+\log n)\log(1/\epsilon) + \log\log \Phi)$ bits per point. 
The result for the Euclidean norm is then obtained by letting $d=O(\epsilon^{-2} \log n)$. Furthermore, since the $\ell_\infty$ norm is universal (any $n$-point metric space can be embedded into $\ell_{\infty}^n$), it also follows that any $n$-point metric can be sketched using at most $O(n \log(1/\epsilon) + \log\log \Phi)$ bits per point,
which we show is tight.
This improves over a naive bound of $O(n \log (\log (\Phi)/\epsilon) )$ obtained by rounding each distance to the nearest integer power of $1+\epsilon$ and storing the exponent, after an appropriate scaling. 

\paragraph{Related Work.} Distance-preserving sketches and data structures have been studied extensively. In what follows we focus on the prior work that is most relevant to the results in this paper. 

As described earlier, it is known that one can approximate the distances between $n$ points in the Euclidean space using $O( \epsilon^{-2} \log (n) \log (n\Phi))$ bits per point. If the goal is to only preserve distances in a certain range, say in the interval $[t, 10t]$ for some parameter $t$, then one can achieve a $O(\epsilon^{-2} \log (n))$ bits bound using distance sketches due to~\cite{KOR98}.
However, in general, there are $O(\log \Phi)$ different ``scales'' $t$ to preserve, so this approach does not improve over the aforementioned bound.
Similarly, the very recent work~\cite{AlonK16} focused on approximating Euclidean distances between points of norm at most $1$, up to an \emph{additive} error of $\epsilon$, and showed a sketch of size $O(\epsilon^{-2}\log n)$ bits per point.

Distance labeling is a general approach to the class of the problems considered in this paper. In particular, it is known~\cite{peleg1989graph,thorup2005approximate} that any metric over $n$ nodes can be represented by roughly $n^{1+\Theta(1/c)}$ numbers while distorting the distances by a factor of $c$. Note that in order to achieve a near-linear sketch size, the distortion must be almost logarithmic.

Quadtrees are  simple and popular data structures for storing point sets (see~\cite{samet1988overview}), often used in practice for low dimensional data, typically 2D or 3D. They are further related to our algorithm, as both are based on a hierarchical clustering of the points. For arbitrary low dimensional points, the quadtree size can still be as large as $n\log\Phi$. Several works~\cite{deBernardo2013,VenkatM14,Gagie15} showed that if the points are ``well clustered'' then the quadtree can be compressed into $O(1)$ bits per point, implying a similar bound for storing the metric space. \cite{hudson2009succinct} considers a different assumption on the point set, called well-seperatedness. They show that this assumption implies that the quadtree has at most $O(n)$ nodes,  which makes it possible  to store the metric with distortion $1+\epsilon$ using $O(\log(1/\epsilon)+(\log\Phi)/n)$ bits per point. Without structural assumptions on the point set, their techniques still require $\Omega(\log\Phi)$ bits per point.
\paragraph{Our techniques} 
Our construction is based on a hierarchical clustering of the metric space, which naturally forms a tree $T$ of clusters. The clustering at level $\ell$ is the partition to connected components induced by drawing edges between points at distance at most $2^\ell$. Each cluster is assigned a representative point called a \emph{center}. Note that unlike other typical variants of hierarchical clustering (eg.~quad-trees), the diameter of our clusters is unbounded in terms of $\ell$.

The tree size is first reduced to linear by compressing long paths of nodes with only one child. From a distance estimation point of view, this means that if a cluster is very well separated from the rest of the metric (in terms of the ratio between its diameter to the distance to the closest external point), then we can replace it entirely with its center for the purpose of estimating the distances between internal and external points.

After the compression, our aim is to store every point as the displacement from a nearby point that is already stored. To this end, we keep track of the structure of the clusters beyond what is given in $T$. A cluster $C$ at level $\ell$ is formed by drawing edges of length up to $2^\ell$ between some clusters $C_1,\ldots,C_k$ in level $\ell-1$. We fix a rooted spanning tree in this graph of clusters, and for every non-root cluster $C_j$, we store its center as the displacement from the closest point in $C_i$, its parent cluster in the spanning tree. We call that point the \emph{ingress} of $C_j$. The distance between the ingress and the center is bounded by $2^\ell+\mathrm{diam}(C_j)$, and does not depend on $\mathrm{diam}(C_i)$. This will ensure we pay (in storage cost) for the diameter of each cluster only once.

The displacement is rounded to a precision that depends on the diameter and the level, roughly $2^{-\ell}\mathrm{diam}(C_j)$, and we show that the total precision cost over all the clusters is linear. By adding the rounded displacement back to the ingress, we obtain an approximation for the center of $C_j$, which we call a \emph{surrogate}. Our estimate for the distance between two points would be the distance between their surrogates.

The above description is oversimplified, and our actual choice of ingresses and surrogates is more careful. First, the closest point to $C_j$ in $C_i$ may have been lost in the preceding compression, so we might need to settle for another nearby point. Second, we need to store the displacement not from the ingress itself, but rather from the surrogate of the ingress, since the latter is what will actually be available to us during estimation. This means the surrogates are defined recursively, and we need to prevent error from accummulating. Ultimately we show that within given parts of the tree (called \emph{subtrees}), we can recover the surrogates up to a fixed (unknown) shift, and this gives us satisfactory estimates for the original distances.

\section{Preliminaries and Formal Statements}\label{sec:preliminaries}
For an integer $n$, let $X$ be a fixed set of $n$ labeled points. Throughout we will use the convention $X=\{1,\ldots,n\}$. In the metric sketching problem, our goal is to design a sketch from which the distance between any pair of points can be approximately recovered, given their labels. 
Formally, let $\mathcal D_X$ be a family of metrics on $X$. For an integer $b>0$, we define a \emph{$b$-bit sketching scheme} for $\mathcal D_X$ as a pair of algorithms $(\mathsf{Summ},\mathsf{Est})$ such that
\begin{itemize}
  \item $\mathsf{Summ}$ is a (possibly randomized) summary algorithm, mapping a metric $D\in\mathcal D_X$ to a bitstring of length $b$.
  \item $\mathsf{Est}$ is an estimation algorithm, which takes the output of $\mathsf{Summ}$ on a metric $D\in\mathcal{D}_X$, and a pair of labels $x,y\in X$, and outputs an estimate for $D(x,y)$.
\end{itemize}
The sketching scheme has \emph{distortion $k$} if for every $D\in\mathcal D_X$, if $S_D$ is the output of a successful execution of $\mathsf{Summ}$ on $D$, then for every $x,y\in X$,
\[ D(x,y) \leq \mathsf{Est}(S_D,x,y) \leq k\cdot D(x,y). \]
\ifprocs
We denote by $b_k(\mathcal D_X)$ the infimum $b$ such that $\mathcal D_X$ has a $b$-bit sketching scheme with distortion $k$.
\else
We denote,
\[ b_k(\mathcal D_X) := \inf\{b : \text{$\mathcal D_X$ has a $b$-bit sketching scheme with distortion $k$} \}. \]
\fi
We are interested in upper bounds on $b_{1+\epsilon}(\mathcal D_X)$ for commonly arising metric families $\mathcal D_X$, and most importantly Euclidean metrics.

To state our results, we recall some basic definitions. The \emph{spread} of a metric $D$ on $X$ is the ratio $\frac{\max_{x,y\in X}D(x,y)}{\min_{x,y\in X, x\neq y}D(x,y)}$. For $p\geq1$, the \emph{$\ell_p$-norm} of a point $v\in\R^d$ is $\norm{v}_p:=\left(\sum_{i=1}^d|v_i|^p\right)^{1/p}$. The \emph{$\ell_\infty$-norm} is $\norm{v}_\infty:=\max_i|v_i|$. For $1\leq p\leq\infty$, a metric $D$ on $X$ is called a \emph{$d$-dimensional $\ell_p$-metric} if there is a map $f:X\rightarrow\R^d$ such that $D(x,y)=\norm{f(x)-f(y)}_p$ for every $x,y\in X$.

Our main theorem is an upper bound for general $\ell_p$ metrics. Let $\mathcal D_p(n,d,\Phi)$ denote the family of all $d$-dimensional $\ell_p$-metrics on $X$ with spread at most $\Phi$. (The dependence on $X$ is omitted in order to keep the notation simple. Recall it is an arbitrary set of $n$ labels.)
\begin{theorem}\label{thm:main}
For every $1\leq p\leq\infty$,
\ifprocs
\[ b_{1+\epsilon}(\mathcal D_p(n,d,\Phi)) = \]
\[ O(n(d+\log n)\log(1/\epsilon) + n\log\log \Phi) . \]
\else
\[ b_{1+\epsilon}(\mathcal D_p(n,d,\Phi)) = O(n(d+\log n)\log(1/\epsilon) + n\log\log \Phi) . \]
\fi
The summary and estimation algorithms are deterministic and run in time $\mathrm{poly}(n,d,\epsilon^{-1},\log\Phi)$.
\end{theorem}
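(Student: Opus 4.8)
The plan is to follow the outline sketched in the introduction: build a hierarchical clustering, compress it to linear size, and then store each cluster center as a rounded displacement from a nearby, already-stored point. First I would normalize the metric so that the minimum interpoint distance is $1$ and all distances lie in $[1,\Phi]$, and fix an embedding $f:X\to\R^d$ realizing it. For each integer level $\ell\in\{0,\ldots,\lceil\log_2\Phi\rceil\}$, let the level-$\ell$ clusters be the connected components of the graph on $X$ joining pairs at distance at most $2^\ell$; these are nested across levels and form a rooted tree whose leaves are singletons and whose root is $X$, with each cluster assigned a center so that a cluster and one of its children share a center. Contracting every maximal path of internal nodes with a unique child yields a tree $T$ with $O(n)$ nodes (every internal node now has at least two children, and there are $n$ leaves). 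For every node of $T$ I store a parent pointer and the level at which it forms by merging its children, for a total of $O(n\log n)$ and $O(n\log\log\Phi)$ bits respectively, both within the claimed bound.

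Second, for a node $C$ of $T$ that forms at level $\ell$ by merging children $C_1,\ldots,C_k$ (clusters at the preceding level), consider the graph on $\{C_1,\ldots,C_k\}$ with an edge between two children whenever some pair of their points is within distance $2^\ell$; it is connected, so fix a spanning tree $\mathcal T_C$ rooted at the child containing $c(C)$. For each non-root child $C_j$ with $\mathcal T_C$-parent $C_i$, let the \emph{ingress} of $C_j$ be a point of $C_i$ that survives the compression and lies within distance $2^\ell+\mathrm{diam}(C_j)$ of $c(C_j)$ (the true nearest point may have been contracted away, so one settles for a nearby surviving one, but the bound stays independent of $\mathrm{diam}(C_i)$). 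Define the \emph{surrogate} recursively: the root's surrogate is $0$, and the surrogate of $C_j$ is the surrogate of its ingress plus the displacement $f(c(C_j))-\mathrm{surr}(\mathrm{in}(C_j))$, rounded coordinatewise to a grid of spacing $\Theta(\epsilon\cdot 2^\ell)$. Store, for each non-root child, a pointer to its ingress and the rounded displacement. The spanning-tree and ingress pointers total $O(n\log n)$ bits; each rounded displacement costs $O(\log(1/\epsilon))$ bits per coordinate plus an extra $O(\log(\mathrm{diam}(C_j)/2^\ell))$, and I would show via a charging argument over $T$ (using $\mathrm{diam}(C_j)\le 2^\ell(|C_j|-1)$ and that diameters grow geometrically up the compressed tree) that the extra term sums to $O(n)$ over all clusters, hence $O(nd)$ over all coordinates. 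Summing the pieces gives the bit bound in the theorem.

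For correctness I would bound $\norm{\mathrm{surr}(x)-(f(x)-f(c(X)))}_p$ by the sum of the rounding errors incurred at the clusters on which $\mathrm{surr}(x)$ recursively depends, and argue that the contributions of clusters that are common ancestors of $x$ and $y$ are identical in $\mathrm{surr}(x)$ and $\mathrm{surr}(y)$ and therefore cancel in the difference. The remaining contributions come from clusters strictly inside the lowest common branching node $C$, which forms at some level $\ell^\ast$; since the formation levels strictly decrease along each root-to-leaf path, these errors sum to $O(\epsilon\sum_{\ell\le\ell^\ast}2^\ell)=O(\epsilon\cdot 2^{\ell^\ast})$. As $x$ and $y$ lie in distinct clusters one level below $C$, we have $D(x,y)>2^{\ell^\ast-1}$, so $\norm{\mathrm{surr}(x)-\mathrm{surr}(y)}_p=(1\pm O(\epsilon))D(x,y)$; reporting $\frac{1}{1-O(\epsilon)}\norm{\mathrm{surr}(x)-\mathrm{surr}(y)}_p$ yields a value in $[D(x,y),(1+\epsilon)D(x,y)]$ after rescaling $\epsilon$ by a constant. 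The estimation algorithm just replays this deterministic reconstruction from the stored data; both algorithms run in $\mathrm{poly}(n,d,\epsilon^{-1},\log\Phi)$ time.

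The hard part, which the simplified description hides, is making the recursive definition of surrogates simultaneously well-founded and error-stable. The displacement of $C_j$ must be taken from the \emph{surrogate} of its ingress rather than the true ingress point (only the former is available at query time), but the ingress lies in a sibling cluster; one must therefore group the clusters into subtrees and fix a processing order in which every ingress surrogate is computed before it is needed, with no circular dependency, while keeping every ingress within $O(2^\ell+\mathrm{diam}(C_j))$ of $c(C_j)$ and keeping the error accumulated within each subtree below the scale relevant to that subtree. Showing that the surrogates can be recovered consistently up to a single unknown shift per subtree — which is exactly what makes the common-ancestor cancellation above valid — is, I expect, the crux of the proof.
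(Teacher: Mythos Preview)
Your overall architecture—hierarchical clustering, compression, centers, spanning trees on children, ingresses, recursive surrogates—matches the paper. But two concrete choices in your plan do not work together, and your storage accounting hides the failure.

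You fully contract every degree-$1$ path to get an $O(n)$-node tree, and you round every displacement to precision $\Theta(\epsilon\cdot 2^\ell)$. Consider a cluster $B$ that forms at a low level $\ell(B)$ but whose $T$-parent forms at level $\ell(B)+L$ (precisely what full contraction of a long $1$-path produces; $L$ can be $\log\Phi$). The surrogate of $B$ is obtained by rounding at precision $\epsilon\cdot 2^{\ell(B)+L}$, so $\norm{\mathrm{surr}(B)-f(c(B))}\le \epsilon\cdot 2^{\ell(B)+L}$. Now look at the child $C_1$ of $B$ that shares its center (the root of $\mathcal T_B$): you gave it no ingress, and however you proceed you are stuck. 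If you set $\mathrm{surr}(C_1):=\mathrm{surr}(B)$, the error $\epsilon\cdot 2^{\ell(B)+L}$ propagates down to the leaf $c(B)$, and distances inside $B$ (which can be $\sim 2^{\ell(B)}$) get relative error $\epsilon\cdot 2^L$. If instead you re-round at precision $\epsilon\cdot 2^{\ell(B)}$, the displacement you must encode has norm up to $\epsilon\cdot 2^{\ell(B)+L}$, so that single node costs $d\cdot L$ bits; summing level gaps over all edges of the fully contracted tree can be $\Theta(n\log\Phi)$. Your storage formula $d\log(1/\epsilon)+d\log(\mathrm{diam}(C_j)/2^\ell)$ is correct for non-$\tau$-root children (whose ingress is a leaf in a sibling, so the displacement really is $O(2^\ell+\mathrm{diam}(C_j))$), but it simply does not cover this case.

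The paper's resolution is exactly the mechanism you flag as ``the crux'' but do not actually build. It does \emph{not} fully contract: a $1$-path is replaced by a \emph{long edge} only when its length exceeds $\log(\Delta(v_k)/2^{\ell(v_k)})+\log(1/\epsilon)$, so the tree has $O(n\log(1/\epsilon))$ nodes rather than $O(n)$. Removing the long edges partitions $T$ into subtrees, and surrogates are defined independently within each subtree, with the subtree root's surrogate set exactly to $f(c(\text{root}))$. Crucially, the precision at an internal node is the coarse $\delta(v)=\Theta\bigl((1+\Delta(v)/2^{\ell(v)})^{-1}\bigr)$, \emph{not involving $\epsilon$}, which yields $\norm{s^*(v)-f(c(v))}\le 2^{\ell(v)}$ at every node regardless of level gaps above it; the fine $\epsilon$-precision is applied only at the $O(n)$ subtree leaves. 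This two-tier precision together with the long-edge decomposition is what makes both the error and the bit count go through.

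A smaller point: your error analysis via ``sum of rounding errors along the dependency chain'' and ``cancellation of common-ancestor contributions'' misreads the recursion. Because you round $f(c(C_j))-\mathrm{surr}(\mathrm{in}(C_j))$ (displacement from the \emph{surrogate}), one has exactly $\mathrm{surr}(C_j)=f(c(C_j))+r_j$ with $\norm{r_j}\le$ precision—no sum and no cancellation are needed. The dependency chain runs through ingresses (leaves in sibling subtrees), not through $T$-ancestors, so the ancestor picture is not the right one; what actually matters is that within one subtree every surrogate equals the true center plus one rounding error plus the same unknown shift.
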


or Euclidean metrics ($p=2$), we can first apply the Johnson-Lindenstrauss theorem on the pointset (distorting the distances by at most $(1+\epsilon)$) and then apply \cref{thm:main}.

\begin{theorem}[Euclidean metrics]\label{cor:euclidean}
\ifprocs
\[ b_{1+\epsilon}(\mathcal D_2(n,d,\Phi)) = \]
\[ O(\epsilon^{-2}\log(1/\epsilon) \cdot n\log n + n\log\log \Phi) . \]
\else
\[ b_{1+\epsilon}(\mathcal D_2(n,d,\Phi)) = O(\epsilon^{-2}\log(1/\epsilon) \cdot n\log n + n\log\log \Phi) . \]
\fi
\end{theorem}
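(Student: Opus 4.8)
The plan is to reduce \cref{cor:euclidean} to \cref{thm:main} by first compressing the ambient dimension with the Johnson--Lindenstrauss transform, and then instantiating the general $\ell_p$ bound with $p=2$ and the reduced dimension. Fix a $d$-dimensional Euclidean metric $D\in\mathcal D_2(n,d,\Phi)$, realized by a map $f:X\to\R^d$. Applying the Johnson--Lindenstrauss theorem to the $n$ points $f(1),\dots,f(n)$, there is a randomized linear map $g:\R^d\to\R^{d'}$ with $d'=O(\epsilon^{-2}\log n)$ such that, with probability $\geq 1-1/n$, $\norm{g(f(x))-g(f(y))}_2\in\bigl[(1-\epsilon/6)\norm{f(x)-f(y)}_2,\ (1+\epsilon/6)\norm{f(x)-f(y)}_2\bigr]$ for all $x,y\in X$ simultaneously. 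Rescaling, set $D'(x,y):=\tfrac{1}{1-\epsilon/6}\norm{g(f(x))-g(f(y))}_2$; then $D'$ is a $d'$-dimensional $\ell_2$-metric on $X$ satisfying $D(x,y)\le D'(x,y)\le\tfrac{1+\epsilon/6}{1-\epsilon/6}\,D(x,y)$ for every $x,y$. The rescaling is there only to make the approximation one-sided, matching the definition of a sketching scheme.

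Next I would bound the spread of $D'$. Since $D'$ differs from $D$ by a factor lying in $[1,\tfrac{1+\epsilon/6}{1-\epsilon/6}]\subseteq[1,2]$ on every pair (assuming $\epsilon$ small enough, which we may since otherwise the claimed bound is dominated by the $n\log n$ term and trivial), the spread of $D'$ is at most twice that of $D$, hence $O(\Phi)$, so $\log\log(\mathrm{spread}(D'))=O(\log\log\Phi)$ (for $\Phi=O(1)$ this term is $O(1)$ anyway). Therefore $D'\in\mathcal D_2(n,d',O(\Phi))$.

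Now apply \cref{thm:main} with $p=2$, ambient dimension $d'$, spread $O(\Phi)$, and precision parameter $\epsilon/3$. This yields a sketching scheme for $D'$ of size $O\bigl(n(d'+\log n)\log(1/\epsilon)+n\log\log\Phi\bigr)=O\bigl(n(\epsilon^{-2}\log n+\log n)\log(1/\epsilon)+n\log\log\Phi\bigr)=O\bigl(\epsilon^{-2}\log(1/\epsilon)\cdot n\log n+n\log\log\Phi\bigr)$, where the last step uses $\epsilon^{-2}\ge1$. Its estimate $\widehat D(x,y)$ obeys $D'(x,y)\le\widehat D(x,y)\le(1+\epsilon/3)D'(x,y)$, so composing the two one-sided guarantees gives $D(x,y)\le\widehat D(x,y)\le(1+\epsilon/3)\cdot\tfrac{1+\epsilon/6}{1-\epsilon/6}\cdot D(x,y)\le(1+\epsilon)D(x,y)$ for all sufficiently small $\epsilon$. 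The summary algorithm for the Euclidean scheme simply computes $g\circ f$ and runs the summary algorithm of \cref{thm:main} on the metric $D'$; the estimation algorithm is inherited verbatim (note it never needs $g$). Hence $b_{1+\epsilon}(\mathcal D_2(n,d,\Phi))=O\bigl(\epsilon^{-2}\log(1/\epsilon)\cdot n\log n+n\log\log\Phi\bigr)$, and the running time is $\mathrm{poly}(n,d,\epsilon^{-1},\log\Phi)$.

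There is no genuine obstacle here: all the work is in \cref{thm:main}, and this is a routine corollary. The only points demanding a little care are (i) keeping the distortion one-sided, which is why we rescale so that $D'\ge D$ pointwise and then spend the remaining slack on the two factors $(1+\epsilon/3)$ and $\tfrac{1+\epsilon/6}{1-\epsilon/6}$; and (ii) the fact that, unlike the deterministic scheme of \cref{thm:main}, the resulting Euclidean scheme is \emph{randomized}, since the Johnson--Lindenstrauss map meets the distortion bound only with high probability — which is exactly a ``successful execution'' in the sense of the definition in \cref{sec:preliminaries}.
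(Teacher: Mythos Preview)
Your proof is correct and follows exactly the paper's approach: reduce the dimension to $d'=O(\epsilon^{-2}\log n)$ via Johnson--Lindenstrauss and then invoke \cref{thm:main} with $p=2$. Your extra bookkeeping on one-sidedness of the distortion and on the spread of $D'$ is more careful than the paper's one-line justification, but the route is identical.
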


Note that since the Johnson-Lindenstrauss theorem is randomized, then so is the resulting summary algorithm for Euclidean
metrics. This means that with probility $1/\mathrm{poly}(n)$, it may output a sketch that distorts the distances by more than a $(1+\epsilon)$ factor. However, this does not effect the sketch size nor the running time.

We complement this upper bound by a matching lower bound up to the $\log(1/\epsilon)$ term, see \cref{thm:euclidean_lowerbound}.

Another notable special case is $\ell_1$-metrics. By known embedding results, both our upper and lower bounds on $\mathcal D_2(n,d,\Phi)$ hold for $\mathcal D_1(n,d,\Phi)$ as well, see \Cref{sec:l1_metrics} for details.

For general metrics, we obtain tight upper and lower bounds. Let $\mathcal D_{\mathrm{all}}(n,\Phi)$ be the family of all metrics on $X$ with spread at most $\Phi$. Since any metric on $n$ points is an $n$-dimensional $\ell_\infty$-metric, we obtain the following corollary from \cref{thm:main}.
\begin{theorem}[general metrics]\label{cor:general}
\[ b_{1+\epsilon}(\mathcal D_{\mathrm{all}}(n,\Phi)) = O(n^2\log(1/\epsilon) + n\log\log \Phi) . \]
\end{theorem}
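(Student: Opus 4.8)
The plan is to obtain this as an immediate corollary of \cref{thm:main} by realizing an arbitrary finite metric as an $\ell_\infty$-metric of the same dimension and the same spread. The tool is the Fréchet (Kuratowski) embedding: given any $D\in\mathcal D_{\mathrm{all}}(n,\Phi)$ on $X=\{1,\ldots,n\}$, define $f:X\to\R^n$ by $f(x)=(D(x,1),D(x,2),\ldots,D(x,n))$. By the triangle inequality, for all $x,y\in X$ and every coordinate $i$ we have $|D(x,i)-D(y,i)|\leq D(x,y)$, and taking $i=y$ (where the left-hand side equals $D(x,y)$) shows the bound is tight; hence $\norm{f(x)-f(y)}_\infty=D(x,y)$. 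Thus $f$ exhibits $D$ as an $n$-dimensional $\ell_\infty$-metric, and since $f$ is an exact isometry it preserves both the diameter and the closest-pair distance, so it preserves the spread. Consequently $D\in\mathcal D_\infty(n,n,\Phi)$ whenever $D\in\mathcal D_{\mathrm{all}}(n,\Phi)$.

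Given this, a sketching scheme for $\mathcal D_\infty(n,n,\Phi)$ yields one for $\mathcal D_{\mathrm{all}}(n,\Phi)$ at no additional cost. On input $D$ (presented as an $n\times n$ table of distances), the summary algorithm first computes $f$ as above, in $\mathrm{poly}(n)$ time, and then runs the $\mathsf{Summ}$ algorithm of \cref{thm:main} with $p=\infty$ and $d=n$ on the resulting point set; the estimation algorithm is left unchanged. Because the embedding is distance-preserving, the distortion guarantee transfers verbatim: a successful sketch recovers every $D(x,y)=\norm{f(x)-f(y)}_\infty$ up to a factor of $1+\epsilon$. Therefore
\[ b_{1+\epsilon}(\mathcal D_{\mathrm{all}}(n,\Phi)) \;\leq\; b_{1+\epsilon}(\mathcal D_\infty(n,n,\Phi)) \;=\; O\bigl(n(n+\log n)\log(1/\epsilon)+n\log\log\Phi\bigr) \;=\; O\bigl(n^2\log(1/\epsilon)+n\log\log\Phi\bigr), \]
where the middle equality is \cref{thm:main} instantiated at $p=\infty$, $d=n$, and the last step uses $\log n = O(n)$.

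There is no genuine obstacle in this argument; the only points requiring care are the two properties of the Fréchet embedding used above — that it is coordinatewise $1$-Lipschitz with the bound attained at the index of one of the two points (so that the $\ell_\infty$ distance equals, rather than merely lower-bounds, $D(x,y)$), and that passing from $D$ to its $\ell_\infty$ realization does not inflate the spread — both of which are immediate from $f$ being an isometry. The matching lower bound of order $n^2 + n\log\log\Phi$ mentioned in the introduction is established separately and is not needed for the statement proved here.
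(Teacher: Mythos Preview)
Your proof is correct and follows exactly the paper's approach: reduce to \cref{thm:main} via the isometric Fr\'echet embedding of an arbitrary $n$-point metric into $\ell_\infty^n$, then set $p=\infty$ and $d=n$. The paper only states this in one line, while you spell out the embedding and the spread-preservation, but the argument is the same.
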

We show a tight lower bound in \cref{thm:general_lowerbound}.

\section{Summary Algorithm}
In this section we begin the proof of~\cref{thm:main}, by describing and analyzing the summary algorithm. In~\Cref{sec:estimation} we describe and analyze the estimation algorithm, and in~\Cref{sec:runningtimes} we discuss the running times of both algorithms.

Let $D$ be a $d$-dimensional $\ell_p$-metric on $X$, and let $f:X\rightarrow\R^d$ denote the available embedding, meaning $D(x,y)=\norm{f(x)-f(y)}_p$ for all $x,y\in X$. Throughout the proof we write $\norm{\cdot}$ for the $\ell_p$-norm $\norm{\cdot}_p$, omitting the subscript. We assume by normalization that $\min_{x,y\in X,x\neq y}\norm{f(x)-f(y)}=1$, and thus $\Phi$ is an upper bound on the diameter.

We use the following variant of hierarchically well-separated trees (HSTs).~\cite{bartal1996probabilistic}
\begin{definition}\label{def:hst}
Let $T$ be a rooted, edge-weighted tree. Number the levels in $T$ bottom-up, starting with the deepest leaf which is defined to belong to the level $0$. Denote by $\ell(v)$ the level of every node $v$ in $T$.

We say $T$ is a \emph{$k$-hierarchically well-separated tree ($k$-HST)} if for every node $v$ in $T$, each edge connecting $v$ to a child has weight $k^{\ell(v)}$.
\end{definition}

\subsection{Building the Tree}

We now construct a $2$-HST $T$ from $X$ in a bottom-up manner. For $i=0,1,\ldots,\log\mathrm{diam}(X)+1$,
\begin{itemize}
  \item Let $G_i(X,E_i)$ be the (unweighted) graph in which $x,y\in X$ are neighbors if $\norm{f(x)-f(y)}<2^i$. (Note that $E_0=\emptyset$, by our assumption that the minimum pairwise distance is $1$.)
  \item For every connected component $C$ in $G_i$ add a tree node $v$, and let $C(v):=C$.
  \item The connected components of $G_i$ form a partition of $X$, and if $i>0$, the partition at level $i-1$ is a refinement of the partition at level $i$. Add the corresponding tree edges. This means that for all tree nodes $v,u$ at levels $i$ and $i-1$ respectively, such that $C(u)\subset C(v)$, we attach $v$ as the parent of $u$.
\end{itemize}

\paragraph{Notation and terminology.} For every $v\in T$, we denote  its level in $T$ by $\ell(v)$. The \emph{degree} of $v$ is its number of children. The set $C(v)$ is the \emph{cluster} associated with $v$. We denote its diameter by
\[ \Delta(v) := \mathrm{diam}(C(v)). \]
Observe that the leaves in $T$ correspond bijectively to points in $X$, in the sense that for every $x\in X$ there is a unique leaf whose associated cluster is $\{x\}$. We denote  that leaf by $\mathrm{leaf}(x)$.

\begin{observation}\label{obs:separation}
If $x,y\in X$ are at different components of the partition induced by the level-$i$ nodes of $T$, then $\norm{f(x)-f(y)}\geq2^i$.
\end{observation}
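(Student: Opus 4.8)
The statement to prove is~\cref{obs:separation}: if $x,y\in X$ lie in different components of the partition induced by the level-$i$ nodes of $T$, then $\norm{f(x)-f(y)}\geq 2^i$. The plan is a direct contrapositive argument straight from the construction of $T$.

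First I would recall that, by construction, the level-$i$ nodes of $T$ are in bijection with the connected components of the graph $G_i(X,E_i)$, where $\{x,y\}\in E_i$ precisely when $\norm{f(x)-f(y)}<2^i$; the clusters $C(v)$ for $v$ ranging over level-$i$ nodes are exactly these connected components, and hence form the partition in question. So ``$x$ and $y$ are in different components of the level-$i$ partition'' is by definition the same as ``$x$ and $y$ lie in different connected components of $G_i$.''

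Now suppose toward a contradiction that $\norm{f(x)-f(y)}<2^i$. Then $\{x,y\}\in E_i$ by definition of $G_i$, so $x$ and $y$ are adjacent in $G_i$, and in particular they belong to the same connected component of $G_i$ — contradicting the hypothesis. Therefore $\norm{f(x)-f(y)}\geq 2^i$, as claimed.

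There is no real obstacle here: the observation is an immediate unwinding of the definition of the graphs $G_i$ and the fact that the level-$i$ clusters are their connected components. The only thing to be careful about is the edge case $i=0$, but there the claim is vacuous in the sense that $E_0=\emptyset$ (by the normalization that the minimum pairwise distance is $1$), so every point is its own component and any two distinct $x,y$ satisfy $\norm{f(x)-f(y)}\geq 1 = 2^0$; this is consistent with the statement. One could also note monotonicity — the level-$(i{-}1)$ partition refines the level-$i$ partition — but this is not needed for the proof of the observation itself.
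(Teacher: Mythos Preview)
Your proof is correct and matches the paper's treatment: the observation is stated without proof in the paper, being an immediate consequence of the definition of $G_i$ and the fact that the level-$i$ clusters are its connected components. Your contrapositive argument is exactly the intended unwinding.
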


\subsection{Compressing the Tree}
As constructed above, $T$ has $n$ leaves and up to $\log \Phi +2$ levels, and since it may contain degree-$1$ nodes its total size can be as large as $O(n\log \Phi)$. We wish to make it smaller by compressing long paths of degree-$1$ nodes.

A \emph{maximal $1$-path} in $T$ is a downward path $v_0,v_1,\ldots,v_k$ such that $v_1,\ldots,v_{k-1}$ are degree-$1$ nodes, and $v_0$ and $v_k$ are not degree-$1$ nodes ($v_k$ may have degree $0$). For every such path in $T$, if $k>\log(\frac{\Delta(v_k)}{2^{\ell(v_k)}})+\log(1/\epsilon)$, we replace the path from $v_1$ to $v_k$ with a \emph{long edge} directly connecting $v_1$ to $v_k$. (Note that the edge $v_1$ remains a degree-$1$ node in the tree.) We mark it as long and store with it the original path length, $k$. Non-long edges will be called \emph{short edges}.

\begin{lemma}\label{clm:compressed_tree}
The tree after compression has at most $2n(2+\log(1/\epsilon))$ nodes.
\end{lemma}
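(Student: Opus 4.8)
The plan is to bound the nodes of the compressed tree by separately counting the "branching" structure and the "path" structure that survives compression. First I would observe that the branching skeleton is small: in the original tree $T$, every leaf corresponds to a point of $X$, so there are exactly $n$ leaves; consequently the number of nodes with degree $\geq 2$ is at most $n-1$, and together with the leaves this gives at most $2n-1$ nodes that are \emph{not} degree-$1$. Call these the \emph{essential} nodes. The compression only touches maximal $1$-paths, and it leaves all essential nodes in place, so the essential nodes contribute at most $2n-1$ to the final count. It remains to bound the degree-$1$ nodes that survive, and these all lie on (compressed) maximal $1$-paths, one such path for each essential node that is the bottom $v_k$ of some maximal $1$-path — hence at most $2n-1$ maximal $1$-paths in total. (One should be slightly careful about the accounting of endpoints $v_0, v_k$: they are essential and already counted, so a maximal $1$-path contributes only its interior $v_1,\dots,v_{k-1}$ of degree-$1$ nodes, plus possibly the single surviving node $v_1$ when the path is replaced by a long edge.)

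Next I would handle each maximal $1$-path $v_0,v_1,\dots,v_k$ according to whether it was compressed. If $k > \log(\Delta(v_k)/2^{\ell(v_k)}) + \log(1/\epsilon)$, the path from $v_1$ to $v_k$ is replaced by a single long edge, so the path contributes exactly one surviving degree-$1$ node ($v_1$). If $k \leq \log(\Delta(v_k)/2^{\ell(v_k)}) + \log(1/\epsilon)$, the path is kept intact and contributes $k-1 \leq \log(\Delta(v_k)/2^{\ell(v_k)}) + \log(1/\epsilon)$ interior degree-$1$ nodes. In either case a single maximal $1$-path contributes at most $\log(\Delta(v_k)/2^{\ell(v_k)}) + \log(1/\epsilon) + 1$ degree-$1$ nodes (using that this quantity is at least $1$, which follows since $\Delta(v_k) \geq 2^{\ell(v_k)}$ — the cluster $C(v_k)$ has degree $\geq 2$ in $G_{\ell(v_k)}$ hence contains two points joined by an edge of $G_{\ell(v_k)}$ that are in different components of $G_{\ell(v_k)-1}$, so by \Cref{obs:separation} $\Delta(v_k)\geq 2^{\ell(v_k)}$; if $v_k$ is a leaf the bound still holds after taking $\max$ with $1$).

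The main point then is to bound $\sum_{v_k} \log(\Delta(v_k)/2^{\ell(v_k)})$ summed over the bottom endpoints $v_k$ of all maximal $1$-paths. Here I would use that the essential nodes $v_k$ all have degree $\geq 2$ (or are leaves, in which case $\Delta(v_k)=1$ and the term vanishes since we take $\max$ with $1$), so there are at most $n-1$ of them with a nonzero contribution, and for each $\Delta(v_k)/2^{\ell(v_k)} \leq \Phi/1 = \Phi$; this would give $\sum \log(\Delta(v_k)/2^{\ell(v_k)}) \leq (n-1)\log\Phi$, which is too weak — it reintroduces $\log\Phi$. So the real work is a charging argument showing this sum is actually $O(n)$: the quantity $\log(\Delta(v_k)/2^{\ell(v_k)})$ measures, roughly, how many levels the cluster $C(v_k)$ "spans" without branching, and distinct such clusters for distinct $v_k$ are nested, so their spans telescope along root-to-leaf paths. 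Concretely, I expect that one can charge each unit of $\log(\Delta(v_k)/2^{\ell(v_k)})$ to a distinct level-interval of the tree lying strictly below $v_k$ and above the next essential ancestor, and show each level of each cluster is charged $O(1)$ times, yielding $\sum_{v_k}\log(\Delta(v_k)/2^{\ell(v_k)}) = O(n)$. That telescoping/charging step is the main obstacle; the rest is bookkeeping. Combining, the total node count is at most $(2n-1) + \sum_{\text{paths}}\big(\log(\Delta(v_k)/2^{\ell(v_k)}) + \log(1/\epsilon) + 1\big) = O(n) + O(n)\log(1/\epsilon) = 2n(2+\log(1/\epsilon))$ after tracking constants.
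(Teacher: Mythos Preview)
Your overall decomposition matches the paper's: count non-degree-$1$ nodes (at most $2n$), then bound the contribution of each maximal $1$-path by $\log(\Delta(v_k)/2^{\ell(v_k)}) + \log(1/\epsilon)$, and reduce to showing
\[
\sum_{v:\deg(v)\neq 1}\log\!\left(\frac{\Delta(v)}{2^{\ell(v)}}\right)=O(n).
\]
You correctly flag this as the crux. However, your sketched resolution is not a proof. The phrase ``charge each unit of $\log(\Delta(v_k)/2^{\ell(v_k)})$ to a distinct level-interval lying strictly below $v_k$ and above the next essential ancestor'' is directionally confused (ancestors are above, not below) and, more importantly, a level-based charging of this sort would naturally produce a bound proportional to the number of levels, i.e.\ $\log\Phi$, which is exactly what you are trying to avoid. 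The quantity $\log(\Delta(v_k)/2^{\ell(v_k)})$ does \emph{not} telescope along root-to-leaf paths in any obvious way, because the diameter $\Delta(v_k)$ is governed by merges that happen throughout the entire subtree below $v_k$, not just along one path.

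The paper's argument for this step is different and worth knowing. First contract every edge whose top endpoint has degree $1$, obtaining a tree $T'$ with at most $2n$ nodes. The key structural observation is that $\Delta(v)\le \mathrm{wt}(v)$, where $\mathrm{wt}(v)$ is the sum of edge weights in the subtree of $T'$ rooted at $v$ (each edge $u\to u'$ has weight $2^{\ell(u)}$, recording the scale at which the corresponding merge occurred). It then suffices to bound $\sum_{v\in T'}\mathrm{wt}(v)/2^{\ell(v)}$, and this is done by switching the order of summation: a fixed edge of weight $2^{\ell(u)}$ contributes $1$ to its parent's term, at most $1/2$ to its grandparent's, and so on geometrically, hence at most $2$ in total. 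With $\le 2n$ edges this yields the bound $4n$. This geometric-series-over-edges step is the missing idea in your proposal.

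Two minor corrections: your claim $\Delta(v_k)\ge 2^{\ell(v_k)}$ is off by a factor of $2$ (\Cref{obs:separation} only gives $\ge 2^{\ell(v_k)-1}$), and for leaves $\Delta(v_k)=0$, not $1$. Neither affects the final bound, but the handling is cleaner if you simply allow the $\log$ terms to be negative (they only help).
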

\begin{proof}
We charge the degree-$1$ nodes on every $1$-path to the bottom node of the path. The total number of nodes in the tree can then be written as $\sum_{v:\mathrm{deg}(v)\neq1}k(v)$, where $k(v)$ is the length of the maximal $1$-path whose bottom node is $v$. Due to the compression we have $k(v)\leq\log(\frac{\Delta(v_k)}{2^{\ell(v_k)}})+\log(1/\epsilon)$. Since the tree has $n$ leaves, it has at most $2n$ non-degree-$1$ nodes, so the total contribution of the second term is at most $2n\log(1/\epsilon)$. For the total contribution of the first term, we need to show
\begin{equation}\label{eq:longedges}
\sum_{v:\mathrm{deg}(v)\neq1}\log\left(\frac{\Delta(v)}{2^{\ell(v)}}\right) \leq 4n.
\end{equation}
To this end, consider the original tree $T$ (before compression) and contract every edge whose top (parent) node has degree $1$. Do this repeatedly, until there are no more degree-$1$ nodes, to obtain a tree $T'$. When contracting an edge $u\rightarrow v$, we identify the contracted node with the bottom node $v$ of the original edge, and it keeps its $\Delta(v)$ and $\ell(v)$ that were set in $T$ (note that $\ell(v)$ continues to denote the level of $v$ in $T$, and not in $T'$). It is clearly sufficient to prove \cref{eq:longedges} for $T'$ instead of $T$, since we have only removed degree-$1$ nodes in the transition. Also note that $T'$ has $n$ leaves and no degree-$1$ nodes, and hence at most $2n$ nodes in total.

For every node $v\in T'$, $\Delta(v)$ is upper-bounded by the sum of the edge weights in the subtree of $T'$ rooted at $v$, where the weight of an edge $u\rightarrow u'$ in $T'$ is $2^{\ell(u)}$ (see~\cref{def:hst} and recall that $\ell(u)$ denotes the level of $u$ in $T$). To see this, consider an edge $u\rightarrow v$. By construction of $T$, this means the cluster $C(v)$ has been merged into the larger cluster $C(u)$, when the distance from $C(v)$ to $C(u)\setminus C(v)$ was at most $2^{\ell(u)}$. Hence the edge weight, which is $2^{\ell(u)}$, bounds the contribution of that merging to the diameter of $C(u)$. However, if $u$ is a degree-$1$ node, then $C(u)=C(v)$ and no merging has been performed, so there is no contribution to the diameter of $C(u)$ that needs to be accounted for. In sum, only those edges whose top nodes has degree different than $1$ are needed in order to bound the cluster diameters, and these are exactly the edges in $T'$. See \Cref{fig:compression} for illustration.

Consequently, it is sufficient to prove
\[ \sum_{v\in T'}\log\left(\frac{\mathrm{wt}(v)}{2^{\ell(v)}}\right) \leq 4n, \]
where the \emph{weight} $\mathrm{wt}(v)$ of a node $v$ in $T'$ is the sum of the edge weights in its subtree. We will prove the stronger bound,
\begin{equation}\label{eq:longedges_aux}
\sum_{v\in T'}\frac{\mathrm{wt}(v)}{2^{\ell(v)}} \leq 4n,
\end{equation}
by summing over edges. An edge in $T'$ contributes $1$ to the term $\frac{\mathrm{wt}(v)}{2^{\ell(v)}}$ of its parent $v$ (recall that the edge weight is $2^{\ell(v)}$), $1/2$ to the term $\frac{\mathrm{wt}(v')}{2^{\ell(v')}}$ of its grandparent $v'$, $1/4$ to the great-grandparent term, and so on. In total, each edge contributes at most $2$ to the sum in \cref{eq:longedges_aux}, and since $T'$ has at most $2n$ edges, the sum is bounded by $4n$.
\ifprocs
\qedsymbol
\fi
\end{proof}

\begin{figure}
    \centering
    \begin{subfigure}[b]{0.3\textwidth}
        \includegraphics[width=0.7\textwidth]{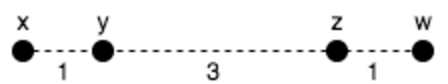}
        \caption{The metric space}
        \label{fig:metric_space}
    \end{subfigure}
 	\quad
    \begin{subfigure}[b]{0.3\textwidth}
        \includegraphics[width=0.7\textwidth]{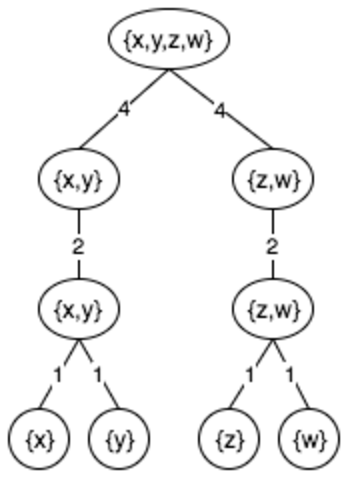}
        \caption{The tree $T$}
        \label{fig:hst}
    \end{subfigure}
    \quad
    \begin{subfigure}[b]{0.3\textwidth}
        \includegraphics[width=0.7\textwidth]{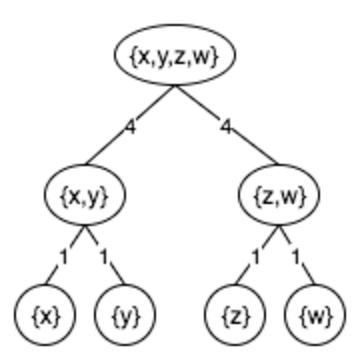}
        \caption{The contracted tree $T'$}
        \label{fig:contracted_hst}
    \end{subfigure}
    \caption{The metric space consists of $4$ colinear points at distances as indicated in (a). In the contracted tree $T'$, the diameter of every cluster is bounded by the sum of edge weights (written on the edges) in the corresponding subtree. }
    \label{fig:compression}
\end{figure}

From now on $T$ will denote the tree after compression. We will often partition it into \emph{subtrees} by removing the long edges.

\subsection{Centers}
With every node $v$ in $T$ we now associate a \emph{center} $c(v) \in X$, which will be a representative point for the associated cluster $C(v)$. We choose the centers by the following bottom-up process on $T$:
\begin{itemize}
  \item If $v$ is a leaf in $T$, i.e.~$v=\mathrm{leaf}(x)$ for some $x\in X$, then set $c(v):=x$.
  \item If $v$ is the top node of a long edge, then recall it is the unique edge outgoing from $v$ (since by construction top nodes of long edges have degree $1$). Denote by $u$ the bottom node of the long edge, and set $c(v):=c(u)$.
  \item Otherwise, $v$ has children $v_1,\ldots,v_k$ connected to it by short edges. Recall that in the graph $G_{\ell(v)}$, $C(v)$ is a connected component that contains each of $C(v_1),\ldots,C(v_k)$. By contracting each of those clusters in $G_{\ell(v)}$ into a single node, we get a connected graph whose nodes correspond to $v_1,\ldots,v_k$. Fix an arbitrary rooted spanning tree of this graph, and denote it $\tau(v)$. Suppose w.l.o.g.~that the root is $v_1$. Set $c(v):=c(v_1)$.
\end{itemize}

For every node $v$ in $T$ we have just fixed a rooted tree $\tau(v)$ on its children $v_1,\ldots,v_k$. We will use those trees later in the construction. To make the text clearer, we will refer to the parent of $v_i$ in $\tau(v)$ as \emph{$\tau$-predecessor} of $v_i$ (which is another child of $v$ in $T$). In contrast, the term \emph{parent} of $v_i$ will be reserved for its parent in $T$ (which is $v$). See \Cref{fig:tautrees} for illustration.

\begin{figure}
    \centering
    \includegraphics[scale=0.6]{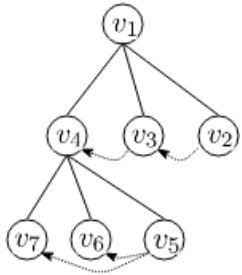}
    \caption{The solid arcs represent the tree $T$, and the dashed arrows represent the trees $\tau(v_1)$ and $\tau(v_4)$, defined on their children in $T$. The tree $\tau(v_1)$ is the path $v_2\rightarrow v_3\rightarrow v_4$. The tree $\tau(v_4)$ is the star with center $v_5$ pointing at $v_6$ and $v_7$. The \emph{parent} of $v_3$ is $v_1$, while its \emph{$\tau$-predecessor} is $v_2$.}
    \label{fig:tautrees}
\end{figure}

\subsection{Ingresses}\label{sec:ingresses}
With every node $v$ in $T$ we will now associate an \emph{ingress} node, $in(v)\in T$. Intuitively, the idea is to store the location of $c(v)$ as its displacement from $c(in(v))$, the center of the ingress node. Therefore we would like the ingress to be a tree node whose center is close to $c(v)$, and such that we have available an approximate location for it.

The ingresses are chosen separately within each subtree, where we recall that the \emph{subtrees} are formed from $T$ by removing the long edges. For the root of the subtree we do not set an ingress as we will not need one. Now suppose we have a node $v$ with children $v_1,\ldots,v_k$ in the same subtree (i.e.~connected to $v$ with short edges). Recall we have a tree $\tau(v)$ on $v_1,\ldots,v_k$, rooted at $v_1$, where an edge in $\tau$ connecting $v_i,v_j$ means that 
\begin{equation}\label{eq:clustertree}
2^{\ell(v)-1} \leq \mathrm{dist}(C(v_i),C(v_j)) < 2^{\ell(v)} ,
\end{equation}
where
\ifprocs
\[ \mathrm{dist}(C(v_i),C(v_j)) := \]
\[ \min\{ \norm{f(x)-f(y)} : x\in C(v_i), y\in C(v_j) \} . \]
\else
\[ \mathrm{dist}(C(v_i),C(v_j)) := \min\{ \norm{f(x)-f(y)} : x\in C(v_i), y\in C(v_j) \} . \]
\fi

For $v_1$, we set $in(v_1)=v$. For $v_i$ with $i>1$, let $v_j$ be the $\tau$-predecessor of $v_i$. Let $y_i\in C(v_j)$ be the closest point to $C(v_i)$ in $C(v_j)$. Note that in $T$, there is a downward path from $v_j$ to $\mathrm{leaf}(y_i)$. We set $in(v_i)$ to be the lowest point on that path that does not go through any long edge.

The motivation for this choice is that ideally we would like $\mathrm{leaf}(y_i)$ to be the ingress of $v_i$, but $\mathrm{leaf}(y_i)$ might be outside the current subtree. As we will see next, we will have approximate locations relative to $c(v_i)$ only for nodes in the same subtree as $v_i$. Therefore we choose the ingress as the node in the current subtree whose center is closest to $y_i$.

The following lemma gives us a bound on the distance between the node center and its ingress center.

\begin{lemma}\label{lmm:ingress}
For every $u\in T$ which is not a root of a subtree, we have
\begin{equation}\label{eq:ingress_goal}
\norm{f(c(u))-f(c(in(u)))} \leq 3\cdot2^{\ell(u)} + \Delta(u) .
\end{equation}
\end{lemma}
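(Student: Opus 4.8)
The plan is to bound the distance between $c(u)$ and $c(in(u))$ by tracing the definition of $in(u)$ and using the triangle inequality together with the diameters of the clusters involved. Recall that $u$ is a child $v_i$ of some node $v$ in the same subtree. There are two cases. If $i=1$, then $in(v_1)=v$ and $c(v)=c(v_1)$ by the choice of centers, so the left-hand side is $0$ and the bound is trivial. The main case is $i>1$: here $in(v_i)$ is the lowest node on the downward path in $T$ from $v_j$ (the $\tau$-predecessor of $v_i$) to $\mathrm{leaf}(y_i)$ that is reachable without traversing a long edge, where $y_i\in C(v_j)$ is the closest point to $C(v_i)$.

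\textbf{Key steps.} First I would decompose
\[ \norm{f(c(v_i))-f(c(in(v_i)))} \leq \norm{f(c(v_i))-f(y_i)} + \norm{f(y_i)-f(c(in(v_i)))} \]
and bound the two terms separately. For the first term: $c(v_i)\in C(v_i)$ and $y_i$ is the point of $C(v_j)$ closest to $C(v_i)$, so by \cref{eq:clustertree} the distance from $y_i$ to $C(v_i)$ is less than $2^{\ell(v)}$; combined with the fact that $c(v_i)$ lies within $C(v_i)$, whose diameter is $\Delta(v_i)$, we get $\norm{f(c(v_i))-f(y_i)} \leq 2^{\ell(v)} + \Delta(v_i)$. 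Since $\ell(v)=\ell(v_i)+1$ (an edge from $v$ to its child $v_i$ has the child one level down in a $2$-HST), this is $2\cdot 2^{\ell(v_i)} + \Delta(v_i)$. For the second term I need to argue that $c(in(v_i))$ is within the cluster containing $y_i$ at some controlled level: the node $in(v_i)$ lies on the path from $v_j$ down toward $\mathrm{leaf}(y_i)$, so $y_i \in C(in(v_i))$, and hence also $c(in(v_i)) \in C(in(v_i))$, giving $\norm{f(y_i)-f(c(in(v_i)))} \leq \Delta(in(v_i))$. The remaining point is to bound $\Delta(in(v_i))$ by something like $2^{\ell(v_i)}+\Delta(v_i)$. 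This should follow because $in(v_i)$ is the lowest ancestor of $\mathrm{leaf}(y_i)$ (below $v_j$) not separated by a long edge; either it sits below the level of $v_i$, in which case I should be able to bound its diameter by comparison to clusters at level $\le \ell(v_i)$, or the bottom of the relevant long edge is involved and the compression rule (from \cref{clm:compressed_tree}, a long edge is only introduced when $k > \log(\Delta(v_k)/2^{\ell(v_k)}) + \log(1/\epsilon)$, so short-edge segments stay within a controlled diameter-to-level ratio) limits $\Delta(in(v_i))$.

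\textbf{Assembling.} Adding the two bounds gives roughly $2\cdot 2^{\ell(v_i)} + \Delta(v_i) + \Delta(in(v_i))$, and using $\Delta(in(v_i)) \leq 2^{\ell(v_i)} + \Delta(v_i)$ would overshoot the claimed $3\cdot 2^{\ell(v_i)} + \Delta(v_i)$; so I expect the sharper accounting to be that $in(v_i)$'s cluster, being a subcluster reached from $y_i$, actually has diameter at most $2^{\ell(v_i)}$ (it is "below" the merging scale relevant to $v_i$), or that the $\Delta(v_i)$ and $\Delta(in(v_i))$ contributions overlap geometrically and need not both be charged. Concretely, I would re-examine whether $in(v_i)$ can be taken at a level strictly less than $\ell(v_i)$: if $\mathrm{leaf}(y_i)$ and the relevant cluster live at levels $\le \ell(v_i)-1$ along a short-edge path, their diameter is at most $2^{\ell(v_i)-1}\cdot(\text{bounded geometric sum})$, and one gets the clean constant $3$. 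The likely bookkeeping is: $\norm{f(c(v_i))-f(y_i)} \le 2\cdot 2^{\ell(v_i)} + \Delta(v_i)$ and $\norm{f(y_i)-f(c(in(v_i)))} \le 2^{\ell(v_i)}$, summing to $3\cdot 2^{\ell(v_i)} + \Delta(v_i)$.

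\textbf{Main obstacle.} The crux is controlling $\Delta(in(v_i))$ — i.e.\ showing that when we are forced to "settle" for a node higher than $\mathrm{leaf}(y_i)$ because a long edge intervenes, the substitute node's cluster still has small diameter relative to $2^{\ell(v_i)}$. This is exactly where the compression rule must be invoked: long edges are only created past the threshold $\log(\Delta/2^\ell)+\log(1/\epsilon)$, which is precisely what prevents $in(v_i)$ from being a huge-diameter cluster. Making this quantitative, and in particular pinning down whether the constant is $3$ (as claimed) versus something larger, is the delicate part; everything else is triangle inequality plus the definitions of centers, ingresses, and the $2$-HST level structure.
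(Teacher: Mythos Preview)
Your approach is essentially the paper's: same case split on $i=1$ versus $i>1$, same triangle-inequality decomposition through $y_i$, and the same first-term bound $\norm{f(c(v_i))-f(y_i)}\le 2\cdot 2^{\ell(v_i)}+\Delta(v_i)$. Your ``likely bookkeeping'' at the end of the Assembling paragraph is exactly right.

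The one step you left uncertain---bounding the second term by $2^{\ell(v_i)}$---is cleaner than you fear, and you should not route it through $\Delta(in(v_i))$. The key observation you are missing is that when $in(v_i)$ is the top of a long edge with bottom node $w$, the center rule sets $c(in(v_i))=c(w)$; since $y_i\in C(w)$ as well, the second term is at most $\Delta(w)$, not $\Delta(in(v_i))$. Now the compression threshold $k>\log(\Delta(w)/2^{\ell(w)})$ gives directly $\Delta(w)<2^{\ell(w)+k}=2^{\ell(in(v_i))}\le 2^{\ell(v_j)}=2^{\ell(v_i)}$. (In the no-long-edge case $in(v_i)=\mathrm{leaf}(y_i)$ and the second term is zero.) So there is no need to compare $\Delta(in(v_i))$ to $\Delta(v_i)$ or to worry about overshooting; the long-edge rule pins the relevant diameter to the \emph{bottom} node, and that is what makes the constant come out to exactly $3$.
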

\begin{proof}
We use the same notation as in the above choice of ingresses. Suppose we have a node $v$ with children $v_1,\ldots,v_k$, and we wish to prove the bound for some $u=v_i$. For $v_1$ we have $in(v_1)=v$ and (by choice of centers) $c(v_1)=c(v)$, hence $c(v_1)=c(in(v_1))$ and \cref{eq:ingress_goal} holds for $u=v_1$ trivially.

Now suppose $i>1$. By \cref{eq:clustertree}, the point $y_i$ satisfies
\[ \norm{f(c(v_i))-f(y_i)} \leq 2^{\ell(v)} + \Delta(v_i). \]
Noting that $\ell(v_i)=\ell(v)-1$, we have \begin{equation}\label{eq:ingress_aux}
\norm{f(c(v_i))-f(y_i)}\leq 2\cdot2^{\ell(v_i)}+\Delta(v_i).
\end{equation}
We have set $in(v_i)$ to be the lowest node in the path from $v_j$ to $\mathrm{leaf}(y_i)$ that does not traverse a long edge. We consider two cases:
\begin{itemize}
  \item The path has no long edges, which means $in(v_i)=\mathrm{leaf}(y_i)$. Then $c(in(v_i))=y_i$, and \cref{eq:ingress_goal} for $u=v_i$ follows from \cref{eq:ingress_aux}.
  \item The path has long edges, which means $in(v_i)$ is the top node of a long edge. Let $k$ be its original length and $w$ its bottom node. Note that $c(in(v_i))=c(w)$. Then
  \[
    \norm{f(y_i) - f(c(in(v_i)))} =
    \norm{f(y_i) - f(c(w))} \leq
    \Delta(w)
\]
\[
    = 2^{\ell(w)+\log(\frac{\Delta(w)}{2^{\ell(w)}})} <
    2^{\ell(w)+k} = 
    2^{\ell(in(v_i))} \leq
    2^{\ell(v_i)}.
  \]
  Combining this with \cref{eq:ingress_aux} yields \cref{eq:ingress_goal} for $u=v_i$.
  \ifprocs
  \qedsymbol
  \fi
\end{itemize}
\end{proof}

We also state the following fact.
\begin{claim}\label{clm:ingress_level}
For a node with an ingress, $\ell(in(v)) \leq \ell(v)+1$.
\end{claim}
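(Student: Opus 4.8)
The plan is to unwind the definition of the ingress from \Cref{sec:ingresses} and track the level of $in(v)$ directly. First I would split into the two cases that the construction distinguishes. If $v$ is the $\tau$-root $v_1$ among the children of some node $v'$, then $in(v_1)=v'$ and $\ell(v')=\ell(v_1)+1$, so the claim holds with equality. Otherwise $v=v_i$ with $i>1$, and $in(v_i)$ is defined as the lowest node on the downward path in $T$ from $v_j$ (the $\tau$-predecessor of $v_i$) to $\mathrm{leaf}(y_i)$ that is not reached by traversing a long edge, where $y_i\in C(v_j)$ is the closest point of $C(v_j)$ to $C(v_i)$.

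In this second case the key observation is that this path starts at $v_j$, and $v_j$ is a child of $v'$ as well, so $\ell(v_j)=\ell(v_i)$. The path from $v_j$ downward is strictly decreasing in level, so every node on it — in particular $in(v_i)$ — has level at most $\ell(v_j)=\ell(v_i)$. Hence in this case we even get the stronger bound $\ell(in(v_i))\leq\ell(v_i)$. Combining the two cases, $\ell(in(v))\leq\ell(v)+1$ always, with the $+1$ needed only for $\tau$-roots.

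I do not expect any real obstacle here; the statement is essentially a bookkeeping fact about where ingresses live, and the only thing to be careful about is correctly reading off, from the construction, that in the $i>1$ case the relevant path emanates from a sibling $v_j$ of $v_i$ at the same level, and that levels decrease as one descends. One minor point worth stating explicitly is that the case analysis is exhaustive: a node that is not a subtree root is a child of some node $v'$ via a short edge, hence appears in $\tau(v')$, and is therefore either the root $v_1$ of $\tau(v')$ or a non-root $v_i$ with $i>1$.

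\begin{proof}
Let $v$ be a node with an ingress; then $v$ is not a subtree root, so it is a child of some node $v'$ via a short edge and appears in the tree $\tau(v')$ fixed on the children $v_1,\ldots,v_k$ of $v'$. We have $\ell(v)=\ell(v')-1$.

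If $v$ is the root $v_1$ of $\tau(v')$, then $in(v)=v'$, so $\ell(in(v))=\ell(v')=\ell(v)+1$, as claimed.

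Otherwise $v=v_i$ for some $i>1$. Let $v_j$ be the $\tau$-predecessor of $v_i$; it is another child of $v'$, so $\ell(v_j)=\ell(v')-1=\ell(v)$. By construction, $in(v)$ lies on the downward path in $T$ from $v_j$ to $\mathrm{leaf}(y_i)$. Every node strictly below $v_j$ on a downward path has level strictly smaller than $\ell(v_j)$, and $v_j$ itself has level $\ell(v_j)=\ell(v)$; hence $\ell(in(v))\leq\ell(v_j)=\ell(v)\leq\ell(v)+1$.

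In both cases $\ell(in(v))\leq\ell(v)+1$.
\end{proof}
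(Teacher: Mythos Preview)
Your proof is correct and follows the same approach as the paper. The paper's own proof is the one-liner ``By construction, $in(v)$ is either the parent of $v$ in $T$, or a descendant of the parent,'' which is exactly the dichotomy you unpack: in the $\tau$-root case $in(v)$ is the parent $v'$ at level $\ell(v)+1$, and in the $i>1$ case $in(v)$ lies on or below the sibling $v_j$, hence at level at most $\ell(v)$.
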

\begin{proof}
By construction, $in(v)$ is either the parent of $v$ in $T$, or a descendant of the parent.
\ifprocs
\qedsymbol
\fi
\end{proof}

\subsection{Surrogates}\label{sec:surrogates}
We now associate a \emph{surrogate} $s^*(v)\in\R^d$ with each tree node $v$, which will be an approximate location for its center $c(v)$. The goal is to choose the surrogates such that the distances between them can be recovered from the sketch, thus approximating the distances between the actual points in $X$.

For $\delta>0$ and $B\subset\R^d$, recall that $N\subset\R^d$ is a \emph{$\delta$-net} for $B$ if for every $q\in B$ there is $\bar{q}\in N$ such that $\norm{q-\bar{q}} \leq \delta$. We use the following known result.

\begin{lemma}\label{lmm:epsnet}
For every $\delta>0$ there is a $\delta$-net $\mathcal N_\delta$ for the unit ball in $\ell_p^d$, of size $O(1/\delta)^d$. 
\end{lemma}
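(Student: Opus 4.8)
The plan is to prove the standard volumetric bound on $\delta$-nets for the unit ball of $\ell_p^d$, which is a classic result that does not actually depend on $p$. First I would fix a maximal $\delta$-separated subset $\mathcal N_\delta$ of the unit ball $B$, i.e.\ a set of points that are pairwise at distance greater than $\delta$ in the $\ell_p$-norm, chosen to be maximal with respect to inclusion. Maximality immediately implies that $\mathcal N_\delta$ is a $\delta$-net: any $q\in B$ that were at distance greater than $\delta$ from all of $\mathcal N_\delta$ could be added to the set, contradicting maximality. So the entire content is the cardinality bound.

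To bound $|\mathcal N_\delta|$, I would use the usual packing argument. Place a ball of radius $\delta/2$ (in the $\ell_p$-norm) around each point of $\mathcal N_\delta$; since the points are $\delta$-separated, these balls have pairwise disjoint interiors. Moreover, each such ball is contained in the ball of radius $1+\delta/2$ centered at the origin, because each net point has norm at most $1$. Comparing $d$-dimensional Lebesgue volumes, and using that the volume of an $\ell_p$-ball of radius $r$ in $\R^d$ scales as $r^d$ times the volume of the unit $\ell_p$-ball, we get
\[
|\mathcal N_\delta|\cdot (\delta/2)^d \;\le\; (1+\delta/2)^d,
\]
so $|\mathcal N_\delta| \le (1 + 2/\delta)^d = O(1/\delta)^d$. (If one wants to avoid worrying about $\delta$ being large, note that we only ever use this lemma for small $\delta$, and for $\delta \ge 1$ a single point suffices; in any case $1+2/\delta = O(1/\delta)$ when we absorb constants, or we can just state the bound as $(1+2/\delta)^d$.)

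The one point to be slightly careful about is that the argument uses Lebesgue measure and the fact that $\ell_p$-balls are measurable sets of finite nonzero volume that scale correctly under dilation — all of which hold for every $1 \le p \le \infty$ (for $p=\infty$ the ball is a cube, for $1\le p<\infty$ it is a bounded convex body, hence measurable with positive finite volume). There is no real obstacle here; the main thing is simply to invoke the translation-invariance and dilation-scaling of Lebesgue measure correctly, and to observe that the disjointness of the small balls plus their containment in one big ball gives the volume inequality. I would therefore present this as a short standard proof, or even just cite it as folklore (e.g.\ it appears in standard references on metric embeddings and covering numbers), since nothing in the rest of the paper needs more than the stated $O(1/\delta)^d$ bound.
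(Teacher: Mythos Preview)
Your packing/volume argument is correct and is the standard proof of this fact; there is no gap.

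The paper itself treats \cref{lmm:epsnet} as a known result and does not prove it at the point of statement. However, in the appendix (proving the algorithmic \cref{lmm:gridnet}) it takes a genuinely different, constructive route: it defines the net as the intersection of the unit ball with the uniform grid of side $\delta d^{-1/p}$, and bounds the number of grid points inside the ball by an induction on the dimension (\cref{clm:sumofballs} and \cref{cor:gridnetsize}), slicing along one coordinate and summing the sizes of the lower-dimensional nets. The payoff of that approach is not a better size bound---both give $O(1/\delta)^d$---but that the resulting net is explicit and supports $O(d\log(1/\delta))$-bit encoding and efficient decoding, which the running-time analysis needs. Your maximal-$\delta$-separated-set argument is shorter and entirely sufficient for the sketch-size theorem, but it is existential and would not by itself yield the encoding/decoding guarantees used later in \Cref{sec:runningtimes}.
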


Let us first give an intuitive description of the choice of surrogates. Take a node $v$ and put $q:=f(c(v))$ for brevity; this is the node location in $\R^d$. We wish to approximately store $q$ with a small number of bits. To this end we pick a point $\bar q$ close to $q$, i.e.~such that $\sigma:=q-\bar q$ has small norm. We then round $\sigma$ to a vector $\tilde\sigma$ using a $\delta$-net, and use $\bar q+\tilde\sigma$ as the surrogate.

The natural choice for $\bar q$ is the ingress of $v$. \Cref{lmm:ingress} then gives a bound on $\norm{\sigma}$, which lets us pick $\delta$ that provides satisfactory approximation while keeping the storage cost of $\tilde\sigma$ small. However, in order to recover the surrogate we also need to store $\tilde q$, the location of the ingress, which is too costly. Instead, we choose $\bar q$ inductively as the surrogate of the ingress, $\bar q:=s^*(in(v))$.

We proceed to the formal construction. The surrogates are defined independently in each subtree. Within a subtree $T_{\mathrm{sub}}$ of $T$, we wish to define $s^*(v)$ inductively from $s^*(in(v))$, so we need an ordering for the induction such that a node is always processed after its ingress. We can achieve this by traversing $T_{\mathrm{sub}}$ in a DFS order, with the order of traversing the children of each node $v$ (with degree greater than $1$) being  top-down on $\tau(v)$. Put differently, when we traverse a node $v$ we first process it, and then (recursively) traverse its children in a top-down order by $\tau(v)$. This means that whenever we process a node $v$, both its parent $v'$ in $T_{\mathrm{sub}}$ and its $\tau$-predecessor $v_{\tau}$ have already been traversed. Since it is a DFS scan, and $v_{\tau}$ is a sibling of $v$ in $T_{\mathrm{sub}}$, this means all descendants of $v_{\tau}$ in $T_{\mathrm{sub}}$ have already been processed. In particular, since $in(v)$ is by construction either $v'$ or a descendant of $v_{\tau}$ in $T_{\mathrm{sub}}$, it means $in(v)$ has already been processed. As we will refer to this ordering again later on, we call it for brevity the \emph{$\tau$-DFS} ordering of the nodes in a subtree of $T$.

We now define the induction steps. Denote
\[ \delta(v) := \left(5+\lceil\frac{\Delta(v)}{2^{\ell(v)}}\rceil\right)^{-1}. \]

\paragraph{Induction base:} For the root $v$ of the subtree, set $s^*(v)=f(c(v))$.

\paragraph{Inductive step:} For a non-root $v$,
\begin{itemize}
  \item Let $\mathrm{disp}(v) := f(c(v)) - s^*(in(v))$ be the displacement from the ingress' surrogate.
  \item Let $\eta^*(v) := \frac{\delta(v)}{2^{\ell(v)}}\cdot\mathrm{disp}(v)$ be the normalized displacement. (We will soon show $\norm{\eta^*(v)}\leq1$.)
  \item Let $\eta(v)$ be the closest point to $\eta^*(v)$ in the net $\mathcal N_{\delta(v)}$.
  \item Finally, the surrogate is $s^*(v) := s^*(in(v)) + \frac{2^{\ell(v)}}{\delta(v)}\cdot\eta(v)$.
\end{itemize}

\begin{lemma}\label{lmm:surrogate}
For every $v\in T$, $\norm{f(c(v))-s^*(v)} \leq 2^{\ell(v)}$.
\end{lemma}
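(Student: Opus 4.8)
The plan is to induct on the nodes of a subtree in the $\tau$-DFS ordering, following the same induction used to \emph{define} the surrogates. The base case is immediate: for the root $v$ of a subtree we set $s^*(v) = f(c(v))$, so $\norm{f(c(v)) - s^*(v)} = 0 \leq 2^{\ell(v)}$. For the inductive step, fix a non-root $v$ and assume the bound holds for $in(v)$, which by the $\tau$-DFS ordering has already been processed. The key chain of inequalities is: first bound $\norm{\mathrm{disp}(v)} = \norm{f(c(v)) - s^*(in(v))}$ by a triangle inequality through $f(c(in(v)))$, using \cref{lmm:ingress} for $\norm{f(c(v)) - f(c(in(v)))} \leq 3\cdot 2^{\ell(v)} + \Delta(v)$ and the inductive hypothesis for $\norm{f(c(in(v))) - s^*(in(v))}\leq 2^{\ell(in(v))}$; then by \cref{clm:ingress_level}, $2^{\ell(in(v))} \leq 2\cdot 2^{\ell(v)}$, so altogether $\norm{\mathrm{disp}(v)} \leq 5\cdot 2^{\ell(v)} + \Delta(v)$.

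Next I would verify the claim deferred in the construction, namely $\norm{\eta^*(v)} \leq 1$. Since $\eta^*(v) = \frac{\delta(v)}{2^{\ell(v)}}\mathrm{disp}(v)$, we get $\norm{\eta^*(v)} \leq \delta(v)\cdot(5 + \Delta(v)/2^{\ell(v)}) \leq \delta(v)\cdot(5 + \lceil \Delta(v)/2^{\ell(v)}\rceil) = 1$ by the definition $\delta(v) = (5 + \lceil \Delta(v)/2^{\ell(v)}\rceil)^{-1}$. So $\eta^*(v)$ lies in the unit ball of $\ell_p^d$, and by \cref{lmm:epsnet} its nearest net point $\eta(v) \in \mathcal N_{\delta(v)}$ satisfies $\norm{\eta^*(v) - \eta(v)} \leq \delta(v)$. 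Now I unwind:
\[
\norm{f(c(v)) - s^*(v)} = \norm{\bigl(s^*(in(v)) + \mathrm{disp}(v)\bigr) - \bigl(s^*(in(v)) + \tfrac{2^{\ell(v)}}{\delta(v)}\eta(v)\bigr)} = \tfrac{2^{\ell(v)}}{\delta(v)}\norm{\eta^*(v) - \eta(v)} \leq \tfrac{2^{\ell(v)}}{\delta(v)}\cdot\delta(v) = 2^{\ell(v)},
\]
which closes the induction.

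The main subtlety — and the reason the constants are chosen as they are — is the interplay between \cref{clm:ingress_level} and the inductive hypothesis: the ingress may sit one level \emph{above} $v$, so its surrogate error $2^{\ell(in(v))}$ can be as large as $2\cdot 2^{\ell(v)}$, and this has to be absorbed into the "$5$" in $\delta(v)$ together with the "$3$" from \cref{lmm:ingress}. One should also check that the induction is well-founded across the whole tree: surrogates are defined independently per subtree with the root handled by the base case, and within a subtree the $\tau$-DFS order guarantees $in(v)$ precedes $v$, so no circularity arises. I do not expect any genuine obstacle here; the lemma is essentially a bookkeeping consequence of the definitions, and the only thing to be careful about is tracking the three sources of error (net rounding, ingress distance, recursive surrogate error) and confirming their sum is accommodated by $\delta(v)$.
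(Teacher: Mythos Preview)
Your proposal is correct and follows essentially the same argument as the paper: induction in the $\tau$-DFS order, base case at the subtree root, and in the inductive step combining \cref{lmm:ingress}, \cref{clm:ingress_level}, and the inductive hypothesis to bound $\norm{\mathrm{disp}(v)}$ by $2^{\ell(v)}/\delta(v)$, then using the net property to finish. Your additional remarks on well-foundedness and the breakdown of the constant $5$ are helpful elaborations but do not alter the proof structure.
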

\begin{proof}
By induction on the $\tau$-DFS ordering within each subtree. In the base case, $v$ is the root and then the claim is trivial since $s^*(v)=f(c(v))$. Now suppose $v$ is not the root. By induction on the ingress we have
\[ \norm{f(c(in(v)))-s^*(in(v))} \leq 2^{\ell(in(v))} , \]
and then by \cref{clm:ingress_level},
\[ \norm{f(c(in(v)))-s^*(in(v))} \leq 2 \cdot 2^{\ell(v)} . \]
By \cref{lmm:ingress},
\[ \norm{f(c(v))-f(c(in(v)))} \leq 3\cdot2^{\ell(v)} + \Delta(v), \]
and together,
\[ \norm{f(c(v))-s^*(in(v))} \leq 5\cdot2^{\ell(v)} + \Delta(v) \leq \frac{2^{\ell(v)}}{\delta(v)}. \]
This implies $\norm{\eta^*(v)}\leq1$, and since $\mathcal N_{\delta(v)}$ is a net for the unit ball, this ensures $\norm{\eta^*(v)-\eta(v)}\leq\delta(v)$. Finally,
\ifprocs
\begin{align*}
  & \norm{f(c(v))-s^*(v)} \\
  &= \norm{f(c(v)) - s^*(in(v)) - \tfrac{2^{\ell(v)}}{\delta(v)}\cdot\eta(v)} \\
  &= \norm{f(c(v)) - s^*(in(v)) - \\
  & \;\;\;\;\;\; \tfrac{2^{\ell(v)}}{\delta(v)}\cdot(\eta^*(v) \eta^*(v)+\eta(v))} \\
  &= \norm{\tfrac{2^{\ell(v)}}{\delta(v)}\cdot(\eta(v)-\eta^*(v))} \\
  &\leq \tfrac{2^{\ell(v)}}{\delta(v)}\cdot\delta(v) \\
  &= 2^{\ell(v)}. \text{\;\;\;\;\;\;\qedsymbol}
\end{align*}
\else
\begin{align*}
  \norm{f(c(v))-s^*(v)} &= \norm{f(c(v)) - s^*(in(v)) - \tfrac{2^{\ell(v)}}{\delta(v)}\cdot\eta(v)} \\
  &= \norm{f(c(v)) - s^*(in(v)) - \tfrac{2^{\ell(v)}}{\delta(v)}\cdot(\eta^*(v)-\eta^*(v)+\eta(v))} \\
  &= \norm{\tfrac{2^{\ell(v)}}{\delta(v)}\cdot(\eta(v)-\eta^*(v))} \\
  &\leq \tfrac{2^{\ell(v)}}{\delta(v)}\cdot\delta(v) \\
  &= 2^{\ell(v)}.
\end{align*}
\fi
\end{proof}

For the leaves of each subtree we will actually use a better $\delta(v)$,
\[ \delta'(v) := \delta(v)\cdot\epsilon. \]
Then the previous lemma yields
\begin{corollary}\label{cor:surrogate_leaf}
For $v\in T$ which is a leaf in its subtree, $\norm{f(c(v))-s^*(v)} \leq 2^{\ell(v)}\cdot\epsilon$.
\end{corollary}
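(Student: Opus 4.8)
The plan is to re-run the induction that proves \cref{lmm:surrogate}, changing only the rounding step at the leaves of subtrees. Let $v$ be a node that is a leaf in its subtree. If $v$ is also the root of that subtree, then $s^*(v)=f(c(v))$ and the claim is trivial, so assume $v$ has an ingress $in(v)$ in its subtree. Keep $\mathrm{disp}(v):=f(c(v))-s^*(in(v))$, the normalized displacement $\eta^*(v):=\frac{\delta(v)}{2^{\ell(v)}}\cdot\mathrm{disp}(v)$, and the de-normalization factor $\frac{2^{\ell(v)}}{\delta(v)}$ exactly as in the construction, but take $\eta(v)$ to be the nearest point to $\eta^*(v)$ in the \emph{finer} net $\mathcal N_{\delta'(v)}$ with $\delta'(v)=\delta(v)\cdot\epsilon$, in place of $\mathcal N_{\delta(v)}$. (This is the sense in which the smaller $\delta'(v)$ is used: it controls the resolution of the net used to round $\eta^*(v)$, not the scaling that turns $\eta(v)$ back into a displacement.)

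With this single change I would reproduce the chain of inequalities from the proof of \cref{lmm:surrogate}. Nothing in the derivation of
$\norm{f(c(v))-s^*(in(v))}\le 5\cdot2^{\ell(v)}+\Delta(v)\le \frac{2^{\ell(v)}}{\delta(v)}$
changes, since it only uses the inductive hypothesis on $in(v)$, \cref{clm:ingress_level}, and \cref{lmm:ingress}; hence $\eta^*(v)$ still lies in the unit ball. As $\mathcal N_{\delta'(v)}$ is a $\delta'(v)$-net for the unit ball (\cref{lmm:epsnet}), we now get $\norm{\eta^*(v)-\eta(v)}\le\delta'(v)=\delta(v)\cdot\epsilon$, and the concluding computation becomes
\[ \norm{f(c(v))-s^*(v)} = \norm{\tfrac{2^{\ell(v)}}{\delta(v)}\bigl(\eta^*(v)-\eta(v)\bigr)} \le \tfrac{2^{\ell(v)}}{\delta(v)}\cdot\delta(v)\cdot\epsilon = 2^{\ell(v)}\cdot\epsilon, \]
which is exactly the assertion.

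The step I expect to require the most care — and the main, if modest, obstacle — is checking that this local modification does not invalidate \cref{lmm:surrogate} globally, since the argument above still invokes that lemma's induction (both through the inductive hypothesis on $in(v)$ and through the preliminary inequality). The subtlety is that a subtree leaf can itself serve as an ingress of another node, namely when it equals $\mathrm{leaf}(y_i)$ for some sibling $v_i$ and the downward path to it is free of long edges; so we are altering surrogates that are used elsewhere in the construction. However, the proof of \cref{lmm:surrogate} uses the inductive hypothesis on an ingress only through the bound $\norm{f(c(in(v)))-s^*(in(v))}\le 2^{\ell(in(v))}$, and replacing $\mathcal N_{\delta(v)}$ by the finer $\mathcal N_{\delta'(v)}$ at subtree leaves can only \emph{decrease} their surrogate error (to $2^{\ell(\cdot)}\epsilon\le 2^{\ell(\cdot)}$). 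Therefore \cref{lmm:surrogate} continues to hold verbatim for every node of $T$, the preliminary inequality used above is unaffected, and the corollary follows.
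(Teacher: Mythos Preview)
Your proposal is correct and follows exactly the paper's approach: the paper's proof is the one-line remark that ``the corollary follows by simply executing the last round of induction in the proof of \cref{lmm:surrogate} with the improved $\delta(v)$,'' which is precisely what you do. Your additional paragraph verifying that the finer net at subtree leaves does not break \cref{lmm:surrogate} for other nodes (since subtree leaves can serve as ingresses but the modification only tightens their surrogate error) is a valid point that the paper leaves implicit.
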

The corollary follows by simply executing the last round of induction in the proof of \cref{lmm:surrogate} with the improved $\delta(v)$.

\subsection{The Sketch}\label{sec:thesketch}
In the sketch we store the following information:
\begin{itemize}
  \item The tree $T$. For each edge we store whether it is short or long, and for the long edges we store their original lengths.
  \item For every tree node $v$ we store the center label $c(v)$, the ingress label $in(v)$, the value $ֿ\delta(v)^{-1}$ (which is the integer $5+\lceil\frac{\Delta(v)}{2^{\ell(v)}}\rceil$), and the approximate displacement $\eta(v)$, encoded as an element of $N_{\delta(v)}$ (or $N_{\delta'(v)}$, if $v$ is a leaf in its subtree).
\end{itemize}

The purpose of storing the lengths of long edges is to compute the levels $\ell(v)$, which we recall are the levels in the uncompressed tree. They are needed in order to recover the surrogates (up to a shift), as will be discussed in \Cref{sec:estimation}.

We now bound the total size of the sketch. We start with the following observation.
\begin{claim}\label{clm:subtree_leaves}
$(i)$ The number of long edges in $T$ is at most $2n$.

$(ii)$ The number of nodes in $T$ which are leaves in their subtree is at most $3n$.
\end{claim}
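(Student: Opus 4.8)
The plan is to prove the two parts separately, in both cases by a charging argument against the leaves of $T$ (of which there are exactly $n$, one per point of $X$).

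\textbf{Part $(i)$.} Long edges arise only from compressed maximal $1$-paths, and each maximal $1$-path has a distinct bottom node $v_k$ which is not a degree-$1$ node. So I would charge each long edge to the bottom node of its $1$-path. Distinct long edges get distinct bottom nodes (two maximal $1$-paths cannot share a bottom node, since the maximal $1$-path ending at a given node is unique), so the number of long edges is at most the number of non-degree-$1$ nodes in $T$ after compression. A tree with $n$ leaves has at most $n-1$ internal branching nodes, plus the $n$ leaves themselves, so at most $2n-1 < 2n$ nodes have degree $\neq 1$. Hence there are at most $2n$ long edges. (Alternatively: each long edge has its top node of degree $1$ and its bottom node of degree $\neq 1$, and the bottom-node map is injective, giving the same bound.)

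\textbf{Part $(ii)$.} A node is a leaf of its subtree iff it is either a leaf of $T$, or the top node of a long edge (removing long edges is exactly what cuts $T$ into subtrees, so the subtree-leaves are precisely the $T$-leaves together with the top endpoints of long edges — note the top node of a long edge has degree $1$ in $T$, its only child being across the long edge, so after deletion it becomes a leaf). There are exactly $n$ leaves of $T$, and by Part $(i)$ at most $2n$ long edges, hence at most $2n$ distinct top-of-long-edge nodes. Adding these gives at most $n + 2n = 3n$ nodes that are leaves in their subtree.

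I do not anticipate a genuine obstacle here; the only point requiring a little care is the exact characterization in Part $(ii)$ of which nodes become leaves after deleting long edges — specifically verifying that the top node of a long edge really does become a leaf (it has no other children by the construction of the compression step, where top nodes of long edges are degree-$1$ nodes) and that no other new leaves are created. Once that characterization is pinned down, both bounds are immediate counting, and Part $(ii)$ reuses Part $(i)$ directly.
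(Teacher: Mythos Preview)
Your proposal is correct and matches the paper's proof essentially line for line: for $(i)$ you charge each long edge to its bottom node, which has degree $\neq 1$, and bound such nodes by $2n$; for $(ii)$ you observe that subtree-leaves are exactly the leaves of $T$ together with top nodes of long edges, giving $n+2n=3n$. The only cosmetic difference is that the paper counts non-degree-$1$ nodes in the \emph{uncompressed} tree whereas you count them in the compressed tree, but the bound and the argument are the same.
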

\begin{proof}
For part $(i)$, recall that the bottom node of every long edge had degree different than $1$ in the original tree (before compression). Since that tree had $n$ leaves, it could only have $2n$ such nodes. Part $(ii)$ follows from $(i)$ by noting that each node in $T$ which is a leaf in its subtree is either a leaf in the original (non-compressed) tree, or the top node of a long edge.
\ifprocs
\qedsymbol
\fi
\end{proof}

\begin{lemma}\label{lmm:sketch_size}
The total sketch size is $O(n(d+\log n)\log(1/\epsilon) + n\log\log \Phi)$.
\end{lemma}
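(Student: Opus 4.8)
The plan is to bound the sketch size by summing the costs of the components listed in \Cref{sec:thesketch}, using \Cref{clm:compressed_tree} to control the number of nodes and \Cref{clm:subtree_leaves} to control the number of long edges and of subtree-leaves. Write $m$ for the number of nodes of the compressed tree $T$; by \Cref{clm:compressed_tree}, $m = O(n\log(1/\epsilon))$. I would then account for each stored item separately.

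First, the tree structure. The tree $T$ on $m$ nodes can be written down with $O(m)$ bits (e.g.\ a balanced-parenthesis encoding), plus one bit per edge marking short/long, plus the original lengths of the long edges. Each long-edge length is at most $\log\Phi + 2$, so it takes $O(\log\log\Phi)$ bits, and by \Cref{clm:subtree_leaves}$(i)$ there are at most $2n$ long edges, contributing $O(n\log\log\Phi)$ in total. So the tree costs $O(m + n\log\log\Phi) = O(n\log(1/\epsilon) + n\log\log\Phi)$.

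Next, the per-node data. Each center label $c(v)$ and ingress label $in(v)$ is an element of $X=\{1,\dots,n\}$ (resp.\ a node of $T$), costing $O(\log n)$ bits, for a total of $O(m\log n) = O(n\log n\log(1/\epsilon))$ bits. For the value $\delta(v)^{-1} = 5 + \lceil \Delta(v)/2^{\ell(v)}\rceil$: naively this is $O(\log(\Delta(v)/2^{\ell(v)}))$ bits per node, and the key point is that the sum $\sum_v \log(\Delta(v)/2^{\ell(v)})$ is $O(n)$ — this is exactly what was established in the proof of \Cref{clm:compressed_tree} (inequality \eqref{eq:longedges}, after reducing to non-degree-$1$ nodes; degree-$1$ nodes inherit $\Delta$ and $\ell$ from the bottom of their path and contribute nothing new, or can be bounded via the long-edge compression bound). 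So all the $\delta(v)^{-1}$ values together cost $O(n + m) = O(n\log(1/\epsilon))$ bits. Finally, the net element $\eta(v)$: by \Cref{lmm:epsnet}, $\mathcal N_{\delta(v)}$ has size $O(1/\delta(v))^d$, so encoding $\eta(v)$ costs $O(d\log(1/\delta(v))) = O\big(d\log(5 + \Delta(v)/2^{\ell(v)})\big)$ bits; summing over all $m$ nodes and using $\sum_v \log(\Delta(v)/2^{\ell(v)}) = O(n)$ again gives $O(dn + dm) = O(dn\log(1/\epsilon))$ bits. For the subtree-leaves we use $\mathcal N_{\delta'(v)}$ with $\delta'(v) = \epsilon\,\delta(v)$, which multiplies the net size by $(1/\epsilon)^d$, i.e.\ adds $O(d\log(1/\epsilon))$ bits per such node; by \Cref{clm:subtree_leaves}$(ii)$ there are at most $3n$ of them, so this adds only $O(dn\log(1/\epsilon))$ bits overall.

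Adding the four contributions gives $O(n\log(1/\epsilon) + n\log\log\Phi) + O(n\log n\log(1/\epsilon)) + O(n\log(1/\epsilon)) + O(dn\log(1/\epsilon)) = O(n(d+\log n)\log(1/\epsilon) + n\log\log\Phi)$, as claimed. The one step requiring genuine care — and the main obstacle — is the bound on $\sum_v \log(\Delta(v)/2^{\ell(v)})$ over \emph{all} nodes of the compressed tree (needed for both the $\delta(v)^{-1}$ values and the net sizes): degree-$1$ nodes are not covered by inequality \eqref{eq:longedges} directly, so one must argue that along a maximal $1$-path all the surviving nodes share the $\Delta$ and $\ell$ of the path's bottom node (so the path contributes at most its length $k$ times that one $\log$ term), and then invoke the compression bound $k \le \log(\Delta(v_k)/2^{\ell(v_k)}) + \log(1/\epsilon)$ together with \eqref{eq:longedges} to absorb everything into $O(n\log(1/\epsilon) + n)$; everything else is routine bit-counting.
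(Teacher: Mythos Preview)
Your overall decomposition matches the paper's proof almost exactly: tree structure, centers, ingresses, precisions, displacements, with the same appeal back to \cref{clm:compressed_tree} for the crucial sum. Two points deserve correction.

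First, the degree-$1$ nodes on a $1$-path do \emph{not} share the level $\ell$ of the bottom node: they share $\Delta$, but $\ell$ increases by one at each step up the path, so $\Delta(v_i)/2^{\ell(v_i)}$ decays geometrically. Your inequality ``the path contributes at most $k\cdot L$'' (with $L=\log(\Delta(v_k)/2^{\ell(v_k)})$) is still valid, since each term is at most $L$; but from there, the compression bound $k\le L+\log(1/\epsilon)$ together with \eqref{eq:longedges} (which only gives $\sum L \le 4n$) is \emph{not} enough to bound $\sum kL$: you would need $\sum L^2 = O(n)$, and that does not follow from $\sum L \le 4n$ alone. The clean fix is to use the stronger inequality \eqref{eq:longedges_aux}, i.e.\ $\sum_{v:\deg(v)\neq1} \Delta(v)/2^{\ell(v)} \le 4n$, together with the geometric decay along each $1$-path, to get $\sum_{v\in T}\Delta(v)/2^{\ell(v)} = O(n)$ directly; then $\log(1/\delta(v)) \le 3 + \Delta(v)/2^{\ell(v)}$ finishes the bound. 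This is what the paper is pointing to when it writes ``see the proof of \cref{clm:compressed_tree}''.

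Second, a minor gap: an ingress is a node of $T$, and $T$ has $m=O(n\log(1/\epsilon))$ nodes, so storing it na\"ively costs $O(\log n+\log\log(1/\epsilon))$ bits, not $O(\log n)$. The paper closes this by observing that every ingress is either the parent of $v$ or a subtree-leaf, hence one of only $O(n)$ specific nodes (by \cref{clm:subtree_leaves}), which restores the $O(\log n)$-per-node bound.
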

\begin{proof}
We start by analyzing the space needed to store the tree structure. By \cref{clm:compressed_tree} the compressed tree $T$ has size $O(n\log(1/\epsilon))$, so its structure can be stored using $O(n\log(1/\epsilon))$ bits. The length of each long edge is bounded by the height of the original tree, which by construction is at most $\log \Phi+1$, so by \cref{clm:subtree_leaves} the total storage cost of the lengths is at most $2n\log(\log\Phi+1)$ bits. Overall, the tree structure requires $O(n\log(1/\epsilon)+n\log\log \Phi)$ bits to store. We now analyze the cost of the information stored for each node.
\begin{itemize}
  \item Centers: Each center is a label in $X$ and hence takes $\log n$ bits to store. In total, $O(n\log(1/\epsilon)\cdot\log n)$ bits.
  \item Ingresses: $in(v)$ is a node in $T$, but we can further observe that $in(v)$ is either the parent of $v$ or a node which is a leaf in its subtree. Therefore by \cref{clm:subtree_leaves} the ingress is one of $O(n)$ possible nodes, and takes $O(\log n)$ bits to store. In total, $O(n\log(1/\epsilon)\cdot\log n)$ bits.
  \item Precisions: Their total storage cost is
\ifprocs
  \[
    \sum_{v\in T}\log\left(\frac{1}{\delta(v)}\right) = 
    \sum_{v\in T}\log\left(5+\lceil\frac{\Delta(v)}{2^{\ell(v)}}\rceil\right)
  \]
  \[
    \leq 3|T| + \sum_{v\in T}\log\left(\frac{\Delta(v)}{2^{\ell(v)}}\right).
  \]
\else
  \[
    \sum_{v\in T}\log\left(\frac{1}{\delta(v)}\right) = 
    \sum_{v\in T}\log\left(5+\lceil\frac{\Delta(v)}{2^{\ell(v)}}\rceil\right) \leq
    3|T| + \sum_{v\in T}\log\left(\frac{\Delta(v)}{2^{\ell(v)}}\right).
  \]
\fi
  Since $\sum_{v\in T}\log\left(\frac{\Delta(v)}{2^{\ell(v)}}\right) = O(|T|) = O(n\log(1/\epsilon))$ (see the proof of \cref{clm:compressed_tree}), the total storage cost of the precisions is $O(n\log(1/\epsilon))$ bits.
  \item Displacements: By \cref{lmm:epsnet}, $\eta(v)$ is a point in a set of size $O(1/\delta(v))^d$, hence storing $\eta(v)$ takes $d\log(1/\delta(v))$ bits. Summing over all $v\in T$ we get $O(dn\log(1/\epsilon))$ bits, as shown above for the precisions. For the leaves of every subtree we kept a displacement up to an improved approximation, $\delta'(v)=\delta(v)\cdot\epsilon$. This adds $d\log(1/\epsilon)$ bits per $v$, and since by \cref{clm:subtree_leaves} there are $O(n)$ such nodes, in total this consumes additional $O(nd\log(1/\epsilon))$ bits.
\end{itemize}
In total, $O(n(d+\log n)\log(1/\epsilon) + n\log\log \Phi)$ bits.
\ifprocs
\qedsymbol
\fi
\end{proof}

\section{Estimation Algorithm}\label{sec:estimation}
We now show how to use the sketch to produce a $(1\pm\epsilon)$-approximation for the distance between any two points in $X$. The key point is that within each subtree, we can recover the surrogates up to a fixed (unknown) shift from the sketch. Formally, for every $v\in T$ we define the \emph{shifted surrogate} $s(v)\in\R^d$:
\begin{itemize}
  \item If $v$ is the root of its subtree, set $s(v):=\mathbf0$ (the origin in $\R^d$).
  \item Otherwise, set $s(v):=s(in(v))+\frac{2^{\ell(v)}}{\delta(v)}\cdot\eta(v)$.
\end{itemize}
Observe that we can indeed compute the shifted surrogate from the sketch: For every $v$ we have stored explicitly $in(v)$, $\ell(v)$ (inherent in storing the tree structure), $\delta(v)^{-1}$, and an encoding of $\eta(v)$ as an element in $\mathcal N_{\delta(v)}$ that can now be decoded. With those at hand, we can compute the shifted surrogates inductively in the $\tau$-DFS order on the subtree.

Furthermore, by comparing this construction to that of \Cref{sec:surrogates}, it is straightforward to see that for every node $v\in T$ we have $s(v)=s^*(v)-s^*(r)$, where $r$ is the root of the subtree in which $v$ resides. This means that the shifted surrogates within every subtree are the same as the original surrogates up to a fixed shift $s^*(r)$ (which cannot be recovered from the sketch, since it equals $f(c(r))$ and we never stored the true embedding of any point in the sketch). Hence,
\begin{claim}\label{clm:shifted_surrogate}
For every $v,v'\in T$ which are in the same subtree, $\norm{s(v)-s(v')}=\norm{s^*(v)-s^*(v')}$.
\end{claim}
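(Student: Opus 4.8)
The plan is to prove Claim~\ref{clm:shifted_surrogate} as an immediate consequence of the relation $s(v) = s^*(v) - s^*(r)$ asserted (informally) in the paragraph preceding the claim, where $r$ is the common root of the subtree containing $v$ and $v'$. So the real work is to establish this relation rigorously, and then the claim follows by subtracting: if $v$ and $v'$ lie in the same subtree with root $r$, then $s(v) - s(v') = (s^*(v) - s^*(r)) - (s^*(v') - s^*(r)) = s^*(v) - s^*(v')$, and taking $\norm{\cdot}$ of both sides gives the statement.

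To establish $s(v) = s^*(v) - s^*(r)$, I would argue by induction on the $\tau$-DFS ordering within the subtree rooted at $r$, exactly the ordering used to define the surrogates in \Cref{sec:surrogates}. For the base case, $v = r$: by definition $s(r) = \mathbf{0}$ and $s^*(r) = f(c(r))$, so $s(r) = \mathbf{0} = s^*(r) - s^*(r)$. For the inductive step, let $v \neq r$ be a node in the subtree. Its ingress $in(v)$ lies in the same subtree (this is exactly why ingresses were defined separately within each subtree — see \Cref{sec:ingresses}) and, crucially, precedes $v$ in the $\tau$-DFS order, so the inductive hypothesis applies to $in(v)$, giving $s(in(v)) = s^*(in(v)) - s^*(r)$. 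Now compare the two recursions: by definition $s(v) = s(in(v)) + \tfrac{2^{\ell(v)}}{\delta(v)}\cdot\eta(v)$ and $s^*(v) = s^*(in(v)) + \tfrac{2^{\ell(v)}}{\delta(v)}\cdot\eta(v)$, with the \emph{same} $\ell(v)$, $\delta(v)$, and $\eta(v)$ (note $\eta(v)$ depends only on the stored data, not on the shift). Subtracting the hypothesis from the $s^*$-recursion and substituting, $s(v) = (s^*(in(v)) - s^*(r)) + \tfrac{2^{\ell(v)}}{\delta(v)}\cdot\eta(v) = s^*(v) - s^*(r)$, completing the induction.

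I would take care to note one subtlety in matching the recursions: for a node that is a leaf in its subtree, the surrogate $s^*(v)$ is defined using the improved precision $\delta'(v) = \delta(v)\cdot\epsilon$ rather than $\delta(v)$ (\Cref{cor:surrogate_leaf}), and likewise $\eta(v)$ is encoded in $\mathcal N_{\delta'(v)}$. The shifted-surrogate recursion must therefore use the same $\delta'(v)$ and the same decoded $\eta(v)$ for such nodes; since the sketch stores which nodes are leaves in their subtree (inherent in storing $T$ with its long/short edge labels) and stores $\delta(v)^{-1}$, both quantities are recoverable, and the recursions still match termwise. With that caveat handled, the argument is routine.

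The only potential obstacle is the bookkeeping around the ordering: one must be certain that $in(v)$ has already been processed when $v$ is processed, so that the inductive hypothesis is legitimately available. But this was already argued in \Cref{sec:surrogates} when defining the $\tau$-DFS order — $in(v)$ is either the parent $v'$ of $v$ in the subtree or a descendant of the $\tau$-predecessor $v_\tau$ of $v$, and in a $\tau$-DFS scan both $v'$ and all descendants of $v_\tau$ precede $v$ — so I would simply invoke that fact. Apart from this, the proof is a direct structural induction with no analytic content, which is why it appears as a short claim rather than a lemma.
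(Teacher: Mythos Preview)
Your proposal is correct and is essentially the paper's own argument made explicit: the paper simply asserts that ``by comparing this construction to that of \Cref{sec:surrogates}, it is straightforward to see that $s(v)=s^*(v)-s^*(r)$'' and then states the claim as an immediate consequence, whereas you spell out the underlying $\tau$-DFS induction and the $\delta'(v)$ bookkeeping. There is no substantive difference in approach.
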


Now given $x,y\in X$, we show to how to compute from the sketch a $(1\pm\epsilon)$-estimate for $\norm{f(x)-f(y)}$. Let $u$ be the lowest common ancestor of $\mathrm{leaf}(x)$ and $\mathrm{leaf}(y)$. Let $v_x$ be the lowest node on the path from $u$ down to $\mathrm{leaf}(x)$ that does not traverse a long edge. Similarly define $v_y$ for $y$. Note that $u,v_x,v_y$ are all in the same subtree, and $v_x,v_y$ are leaves in that subtree. See \Cref{fig:estimation} for illustration. The estimate we return is $\norm{s(v_x)-s(v_y)}$. By \cref{clm:shifted_surrogate} it equals $\norm{s^*(v_x)-s^*(v_y)}$, so our goal is to prove
\begin{equation}\label{eq:goal}
\norm{s^*(v_x)-s^*(v_y)} = (1\pm O(\epsilon)) \cdot \norm{f(x)-f(y)} .
\end{equation}
By the triangle inequality we have
\ifprocs
\begin{equation}\label{eq:estimate_aux}
  \norm{s^*(v_x)-s^*(v_y)} =
\end{equation}
\[   \norm{f(x)-f(y)} \pm \left( \norm{f(x)-s^*(v_x)} + \norm{f(y)-s^*(v_y)} \right) . \]
\else
\begin{equation}\label{eq:estimate_aux}
  \norm{s^*(v_x)-s^*(v_y)} = 
  \norm{f(x)-f(y)} \pm \left( \norm{f(x)-s^*(v_x)} + \norm{f(y)-s^*(v_y)} \right) .
\end{equation}
\fi
Now consider two cases for $v_x$:
\begin{itemize}
  \item If $v_x=\mathrm{leaf}(x)$ then $c(v_x)=x$, and hence by \cref{cor:surrogate_leaf}, $\norm{f(x)-s^*(v_x)}\leq2^{\ell(v_x)}\epsilon$.
  \item Otherwise, $v_x$ is the top node of a long edge. Let $k$ be its original length and $w_x$ its bottom node. Recall that by the construction, $k > \log(\Delta(w_x)/2^{\ell(w_x)})+\log(1/\epsilon)$. Also note that $c(v_x)=c(w_x)$. Then
\ifprocs
  \[
    \norm{f(x) - f(c(v_x))} = 
    \norm{f(x) - f(c(w_x))} \leq
    \Delta(w_x) =
  \]
  \[
    2^{\ell(w_x)+\log(\frac{\Delta(w_x)}{2^{\ell(w_x)}})} <
    2^{\ell(w_x)+k-\log(1/\epsilon)} = 
    2^{\ell(v_x)}\epsilon.
  \]
\else
  \[
    \norm{f(x) - f(c(v_x))} = 
    \norm{f(x) - f(c(w_x))} \leq
    \Delta(w_x) =
    2^{\ell(w_x)+\log(\frac{\Delta(w_x)}{2^{\ell(w_x)}})} <
    2^{\ell(w_x)+k-\log(1/\epsilon)} = 
    2^{\ell(v_x)}\epsilon.
  \]
\fi
\end{itemize}
Combining this with~\cref{cor:surrogate_leaf}, we get by the triangle inequality that $\norm{f(x)-s^*(v_x)}\leq2\cdot2^{\ell(v_x)}\epsilon$. This bound holds in both the the above cases. Similarly one shows $\norm{f(y)-s^*(v_y)}\leq2\cdot2^{\ell(v_y)}\epsilon$. Since $\ell(v_x)\leq\ell(u)-1$ and $\ell(v_y)\leq\ell(u)-1$, we can add these and obtain
\[ \norm{f(x)-s^*(v_x)} + \norm{f(y)-s^*(v_y)} \leq 2\cdot2^{\ell(u)}\epsilon. \]
By the construction of $T$, the fact that $u$ is the lowest common ancestor of $\mathrm{leaf}(x)$ and $\mathrm{leaf}(y)$ implies $\norm{f(x)-f(y)}\geq2^{\ell(u)-1}$ (see \cref{obs:separation}). Plugging this into the equation above yields
\[ \norm{f(x)-s^*(v_x)} + \norm{f(y)-s^*(v_y)} \leq \norm{f(x)-f(y)}\cdot4\epsilon, \]
and plugging this into \cref{eq:estimate_aux} proves \cref{eq:goal}, which proves \cref{thm:main}.

\begin{figure}
    \centering
    \includegraphics[scale=0.6]{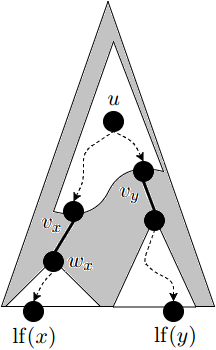}
    \caption{The estimate for $\norm{f(x)-f(y)}$ is $\norm{s(v_x)-s(v_y)}$. The external shaded triangle is the tree $T$. The white regions are subtrees. The dashed arrows are downward paths in $T$. The thick arcs are long edges.}
    \label{fig:estimation}
\end{figure}

\section{Running Times}\label{sec:runningtimes}
To analyze running times, we need an efficient version of \cref{lmm:epsnet}. We prove the following lemma in \Cref{sec:gridnet}.
\begin{lemma}\label{lmm:gridnet}
For every $\delta>0$, the $\ell_p$ unit ball in $\R^d$ has a $\delta$-net $\mathcal{N}_\delta$ such that
\begin{enumerate}
  \item Given $\eta^*\in\R^d$ with $\norm{\eta^*}_p\leq1$, one can find a $\delta$-close vector $\eta\in\mathcal{N}_\delta$ in time $O(d)$.
  \item A vector $\eta\in\mathcal{N}_\delta$ can be encoded as a bitstring of length $O(d\log(1/\delta))$, in time $O(\frac{1}{\delta}d^{1+1/p})$.
  \item Given the bitstring encoding as above, the coordinates of $\eta$ in $\R^d$ can be recovered in time $O(\frac{1}{\delta}d^{1+1/p})$.
\end{enumerate}
\end{lemma}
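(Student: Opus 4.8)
The plan is to take $\mathcal N_\delta$ to be a scaled integer lattice intersected with a slightly enlarged ball, so that the nearest net point to a query is obtained by coordinate-wise rounding. Fix the spacing $g:=\delta/\lceil d^{1/p}\rceil$ (a rational, to avoid $p$-th roots), and set $\mathcal N_\delta:=\{\,gz:z\in\Z^d,\ \norm{gz}_p\le 1+\delta\,\}$; for $p=\infty$ this is just the multiples of $\delta$ in $[-1,1]^d$. Rounding a point coordinate-wise to the nearest multiple of $g$ changes each coordinate by at most $g/2$, hence changes the point by at most $\tfrac g2 d^{1/p}\le\delta/2$ in $\ell_p$. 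Thus for $\eta^*$ in the unit ball, its rounding $\eta$ satisfies $\norm{\eta-\eta^*}_p\le\delta/2$ and, by the triangle inequality, $\norm{\eta}_p\le 1+\delta/2\le 1+\delta$, so $\eta\in\mathcal N_\delta$. This shows $\mathcal N_\delta$ is a $\delta$-net and gives property~1, since the rounding is computed with $O(d)$ arithmetic operations.

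For the size, the cubes $gz+[-g/2,g/2]^d$ over the $gz\in\mathcal N_\delta$ have disjoint interiors and each lies in the $\ell_p$-ball of radius $1+\delta+\tfrac g2 d^{1/p}\le 1+2\delta$. Writing $B_p^d$ for the $\ell_p$ unit ball and comparing volumes, using $\mathrm{vol}(B_p^d)=(2\Gamma(1+1/p))^d/\Gamma(1+d/p)$ and $\Gamma(1+d/p)\ge (d/(ep))^{d/p}$,
\[
 |\mathcal N_\delta|\ \le\ \frac{\mathrm{vol}\big((1+2\delta)B_p^d\big)}{g^d}
 \ \le\ \Big(\tfrac{2(1+2\delta)\,\Gamma(1+1/p)(ep)^{1/p}}{\delta}\Big)^{d}
 \ =\ O(1/\delta)^d ,
\]
because $\Gamma(1+1/p)(ep)^{1/p}$ is bounded by an absolute constant for all $p\ge1$ (and the case $p=\infty$ is the cube $[-1,1]^d$ of volume $2^d$ with $g=\delta$). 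In particular each net point is describable with $O(d\log(1/\delta))$ bits in principle; what remains is to encode and decode efficiently.

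The obstacle is that the obvious encoding --- write each integer $z_i$ in a fixed width --- costs $\Theta\big(d\log(d^{1/p}/\delta)\big)=\Theta\big(d\log(1/\delta)+\tfrac dp\log d\big)$ bits, which is off by the spurious term $\tfrac dp\log d$ exactly when $p$ is small. To remove it, note that $\norm{gz}_p\le 1+\delta$ forces $\sum_i|z_i|^p\le((1+\delta)/g)^p=O(d/\delta^p)$, so the $z_i$ are small on average; encode each coordinate by a self-delimiting prefix-free codeword (a zero flag, a sign bit, and the Elias $\gamma$-code of $|z_i|$), of length $O(1+\log|z_i|)$. If $z$ has $k$ nonzero coordinates then, by concavity of $\log$ applied to $|z_i|^p$,
\[
 \sum_{z_i\neq0}\log|z_i|\ =\ \tfrac1p\sum_{z_i\neq0}\log|z_i|^p\ \le\ \tfrac kp\log\!\Big(\tfrac{O(d/\delta^p)}{k}\Big)\ =\ \tfrac kp\log(d/k)+k\log(1/\delta)+O(k),
\]
and since $\tfrac kp\log(d/k)=O(d)$ and $k\le d$, the total codeword length is $O(d(1+\log(1/\delta)))$ bits, which is $O(d\log(1/\delta))$ in our regime $\delta<1$. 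Encoding and decoding scan the $d$ coordinates once, spending time proportional to the code length; since $\log(1/\delta)<1/\delta$ and $d^{1/p}\ge1$, this is $O(d\log(1/\delta))\subseteq O(\tfrac1\delta d^{1+1/p})$, establishing properties~2 and~3 (decoding outputs the coordinates of $\eta$ in units of $g$). The only real difficulty is thus this last point: seeing that the naive grid code loses a $\log d$ factor and repairing it via the sparse prefix-free encoding above; everything else is routine.
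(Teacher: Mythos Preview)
Your proof is correct and takes a genuinely different route from the paper's. Both use the same net---the scaled integer grid (side $\delta d^{-1/p}$) intersected with the ball---and both get property~1 by coordinate rounding. The divergence is in the encoding. The paper bounds $|\mathcal N_\delta|$ by induction on $d$, slicing the ball by its first coordinate into $(d{-}1)$-dimensional balls and showing the sum of their size bounds fits under $M^d_\delta$; this slicing directly yields a recursive \emph{enumerative} code (partition $[1,M^d_\delta]$ into segments indexed by the first coordinate, recurse within each segment), and computing the $O(d^{1/p}/\delta)$ segment sizes at each of the $d$ levels gives the stated $O(d^{1+1/p}/\delta)$ encode/decode time. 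You instead bound $|\mathcal N_\delta|$ by a one-shot volume comparison and encode each integer coordinate with an Elias $\gamma$-code; the key step is the Jensen bound $\sum_{z_i\neq0}\log|z_i|\le\frac{k}{p}\log(O(d/\delta^p)/k)$, which uses the $\ell_p$ constraint $\sum|z_i|^p=O(d/\delta^p)$ to kill the spurious $\frac{d}{p}\log d$ term that naive fixed-width coding would incur. Your approach is more elementary (no dimension induction, no $\Gamma$-function estimates, no real-arithmetic segment sizes) and actually achieves a stronger time bound $O(d\log(1/\delta))\subseteq O(d^{1+1/p}/\delta)$; the paper's enumerative code, in exchange, is closer to information-theoretically optimal in the leading constant, since it uses essentially $\log|\mathcal N_\delta|$ bits exactly rather than up to constants.
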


\paragraph{Summary time.}
We spend $O(n^2\log\Phi)$ time setting up the distances graph and building and compressing the tree. Then, the processing time for every node $v\in T$ is dominated by encoding the $\delta(v)$-net vectors, which by \cref{lmm:gridnet} takes time $O(d^{1+1/p}/\delta(v))$. Summing over the nodes, and recalling that $\delta(v) \geq \epsilon\cdot\left(5+\lceil\frac{\Delta(v)}{2^{\ell(v)}}\rceil\right)^{-1} \geq \epsilon\cdot\left(6+\frac{\Delta(v)}{2^{\ell(v)}}\right)^{-1}$, we get
\ifprocs
\begin{equation}\label{eq:netencoding}
  \sum_{v\in T}\frac{d^{1+1/p}}{\delta(v)} \leq \frac{d^{1+1/p}}{\epsilon}\left(6|T| + \sum_{v\in T}\frac{\Delta(v)}{2^{\ell(v)}}\right)
\end{equation}
\[  = \frac{d^{1+1/p}}{\epsilon}\cdot O\left(n\log\left(\frac{1}{\epsilon}\right)\right). \]
\else
\begin{equation}\label{eq:netencoding}
  \sum_{v\in T}\frac{d^{1+1/p}}{\delta(v)} \leq
  \frac{d^{1+1/p}}{\epsilon}\left(6|T| + \sum_{v\in T}\frac{\Delta(v)}{2^{\ell(v)}}\right) = 
  \frac{d^{1+1/p}}{\epsilon}\cdot O\left(n\log\left(\frac{1}{\epsilon}\right)\right).
\end{equation}
\fi
(See the proof of \cref{clm:compressed_tree} for the latter bound.) The total summary time is $O(n^2\log\Phi + nd^{1+1/p}\epsilon^{-1}\log(1/\epsilon))$.

\begin{observation}
Note that in the Euclidean case, the $n^2\log\Phi$ term in the running time bound can be reduced to $O(n^{1+\alpha}\log\Phi)$ for any constant $\alpha>0$, at the cost of increasing the sketch size by a multiplicative factor of $\alpha^{-1}$. (The Johnson-Lindenstrauss transform, which we also use as a preceding step, can be executed in time $O(\epsilon^{-2}n\log n)$ \cite{AilonC09}.) To this end, set $c:=\alpha^{-1/2}$. In constructing the tree, we use the algorithm of \cite{HarpeledIM12} to compute $c$-approximate connected components in each level. Their algorithm is based on Locality-Sensitive Hashing (LSH), which in Euclidean spaces can be implemented in time $O(n^{1+1/c^2})$ \cite{AndoniI06}. Using $c$-approximate connected components means that clusters in level $\ell$ of the tree can be merged if the distance between them is at most $c\cdot2^\ell$ (rather than just $2^\ell$), and to account for this constant loss, we need to scale $\epsilon$ down to $\epsilon/c$. Since the dependence of the sketch size on $\epsilon$ is $\log(1/\epsilon)/\epsilon^2$, the multiplicative loss in the sketch size is $c^2=\alpha^{-1}$.
\end{observation}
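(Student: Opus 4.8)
The plan is to isolate the unique source of the $n^2\log\Phi$ term and replace the subroutine responsible for it. Every step of the summary algorithm after the tree is built is already near-linear for $p=2$: by \cref{clm:compressed_tree} the compressed tree has $O(n\log(1/\epsilon))$ nodes, and by \cref{eq:netencoding} the dominant per-node work (encoding net vectors) sums to $O(nd^{1+1/p}\epsilon^{-1}\log(1/\epsilon))$, which with $d=O(\epsilon^{-2}\log n)$ carries no dependence on $\Phi$. The quadratic term comes entirely from building the HST: at each of the $O(\log\Phi)$ levels $i$ we must find the connected components of the graph $G_i$ whose edges are the pairs at distance $<2^i$, and computing them exactly inspects all $\binom{n}{2}$ distances.

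The idea is to build an \emph{approximate} HST, replacing exact connected components at each level by $c$-approximate ones computed with the LSH-based algorithm of~\cite{HarpeledIM12}; for Euclidean distances this runs in time $O(n^{1+1/c^2})$ per level via the LSH of~\cite{AndoniI06}. The relevant guarantee is one-sided: at threshold $r=2^i$, \emph{every} pair at distance $<r$ ends up merged, while a merge may link clusters that are up to $c\cdot r$ apart but never closer. To keep the level partitions nested, so that \cref{def:hst} still describes a valid tree, I would run the routine bottom-up on the points obtained by contracting each level-$(i-1)$ cluster to a representative, which makes the level-$(i-1)$ partition automatically a refinement of the level-$i$ partition. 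I would also union-bound the Monte-Carlo failure probability of LSH over the $O(\log\Phi)$ levels, which is free, since the Johnson-Lindenstrauss preprocessing already makes the summary fail with probability $1/\mathrm{poly}(n)$.

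The crux is then verifying that this perturbation costs only a controlled constant factor in distortion. The lower separation bound \cref{obs:separation} is untouched: two distinct level-$i$ clusters still have every cross-pair at distance $\geq2^i$, as otherwise they would have been merged. Only the upper side worsens --- \cref{eq:clustertree} now reads $\mathrm{dist}(C(v_i),C(v_j))<c\cdot2^{\ell(v)}$, so the ingress bound of \cref{lmm:ingress} becomes $O(c\cdot2^{\ell(u)}+\Delta(u))$, the diameter estimate in the proof of \cref{clm:compressed_tree} picks up a factor $c$ (each merge contributes at most $c\cdot2^{\ell}$), and consequently \cref{lmm:surrogate}, \cref{cor:surrogate_leaf}, and the estimation bound $\norm{f(x)-s^*(v_x)}=O(\epsilon\cdot2^{\ell(u)})$ of \Cref{sec:estimation} each gain a constant factor $c$, while $\norm{f(x)-f(y)}\geq2^{\ell(u)-1}$ is unchanged. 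Hence the scheme now has distortion $1+O(c\epsilon)$, and running the entire construction with $\epsilon':=\epsilon/c$ restores distortion $1+\epsilon$.

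Finally I would collect the parameters. Setting $c:=\alpha^{-1/2}$ gives $1/c^2=\alpha$, so the tree is built in $O(n^{1+\alpha})$ time per level, hence $O(n^{1+\alpha}\log\Phi)$ overall; the Johnson-Lindenstrauss step runs in $O(\epsilon^{-2}n\log n)$ time via~\cite{AilonC09}; and the remaining work is near-linear as above. For the sketch size, \cref{thm:main} with $p=2$ and $d=O(\epsilon^{-2}\log n)$ has $\epsilon$-dependence $\epsilon^{-2}\log(1/\epsilon)$, so replacing $\epsilon$ by $\epsilon/c$ multiplies the bound by $c^2=\alpha^{-1}$, the shift from $\log(1/\epsilon)$ to $\log(c/\epsilon)$ being negligible. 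The main obstacle I anticipate is precisely the verification of the previous paragraph: confirming that the loss from approximate connected components is genuinely one-sided and a single global constant $c$, so that one rescaling $\epsilon\mapsto\epsilon/c$ suffices to recover correctness.
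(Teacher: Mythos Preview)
Your proposal is correct and follows the same approach as the paper: replace exact connected components by $c$-approximate ones via the LSH-based algorithm of~\cite{HarpeledIM12,AndoniI06}, set $c=\alpha^{-1/2}$, and rescale $\epsilon$ to $\epsilon/c$ to absorb the one-sided relaxation of the merging threshold, incurring a $c^2=\alpha^{-1}$ blowup in the $\epsilon^{-2}\log(1/\epsilon)$ sketch-size term. You go further than the paper in actually tracing the factor~$c$ through \cref{obs:separation}, \cref{eq:clustertree}, \cref{lmm:ingress}, \cref{clm:compressed_tree}, and \cref{lmm:surrogate}, which the paper simply asserts; your added remarks on keeping the level partitions nested and on union-bounding the LSH failure probability are also not in the paper but are reasonable clarifications.
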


\paragraph{Estimation Time.}
Since the height of the tree is at most $\log\Phi+2$, we spend $O(\log\Phi)$ time finding the lowest common ancestor of $\mathrm{leaf}(x),\mathrm{leaf}(y)$ and finding $v_x,v_y$. Then we need to compute the shifted surrogates $s(v_x),s(v_y)$. Due to the inductive definition of the $s(v_x)$, in order to compute $s(v_x)$ we need to traverse $\tau$-predecessors backwards until we reach the root of the subtree, whose shifted surrogate is known to be $\mathbf0$. In the worst case we might traverse all nodes in $T$. For each node $v$ we need to decode the $\delta(v)$-net vector $\eta(v)$, which by \cref{lmm:gridnet} takes time $O(d^{1+1/p}/\delta(v))$. Applying \cref{eq:netencoding} again, we see that the total estimation time is $O(\log\Phi + nd^{1+1/p}\epsilon^{-1}\log(1/\epsilon))$.

In practical settings such query time is often considered prohibitive. We now describe a modification to our scheme that yields a different trade-off between the sketch size and the estimation time. In particular, letting
\[ K:= \lceil\log(2\cdot\Phi\cdot \epsilon^{-1}\cdot d^{1/p})\rceil , \]
we show how to achieve estimation time of $O(\log\Phi + dK)$ in the expense of increasing the sketch size by a factor of $\log d$. To demonstrate why this is beneficial, consider a typical Euclidean setting in which $d=O(\epsilon^{-2}\log n)$ (by Johnson-Lindenstrauss dimension reduction) and $\Phi=\mathrm{poly}(n)$. \Cref{cor:euclidean} gives a sketch size of $O(\epsilon^{-2}\log(1/\epsilon)\cdot n\log n)$ bits with $\tilde O(\epsilon^{-4}n)$ estimation time.\footnote{We use $\tilde O(f)$ to denote $O(f\cdot\mathrm{polylog}(f))$.} The modification increases the sketch size by a factor of $O(\log\log n + \log(1/\epsilon))$, and improves the estimation time to $\tilde O(\epsilon^{-2}\log^2n)$.

The first estimation bottleneck is decoding the net vectors, and we resolve this by replacing the $\delta$-net from \cref{lmm:gridnet} with the uniform grid $(\frac{\delta}{d^{1/p}}\Z)^d$. In contrast, \cref{lmm:gridnet} uses the intersection of this grid with the unit ball. We can store a point in this grid using $O(d\log(1/\delta)+d\log d)$ bits without any encoding, which adds $O(d\log d)$ bits per point over \cref{lmm:gridnet}. In total, the sketch size increases by a factor of $O(\log d)$, and the processing time of a node $v$ decreases to $O(d)$.

The second estimation bottleneck is computing the shifted surrogates by induction on the $\tau$-predecessors all the way back to the subtree root. We resolve this by storing some shifted surrogates explicitly in the sketch. This is done separately in each subtree $T'$, as follows.
\begin{enumerate}
  \item Construct the tree $T_\tau'$ on the nodes of $T'$, by attaching each node as a child of its $\tau$-predecessor.
  \item Pick $\lceil |T_\tau'|/K \rceil$ nodes in $T_\tau'$, called \emph{landmark nodes}, such that for every $v\in T_\tau'$, we can reach a landmark node from $v$ by going upward in $T_\tau'$ at most $K$ steps. This can be done as follows: Start with a lowest node $v\in T_\tau'$; climb upward $K$ steps (or less if the root is reached), to a node $\hat{v}$; declare $\hat{v}$ a landmark node, remove it from $T_\tau'$ with all its decendants, and iterate. Since every iteration but the last removes at least $K$ nodes from $T_\tau'$, we finish with at most $\lceil |T_\tau'|/K \rceil$ landmark nodes.
\end{enumerate}
For every landmark node $\hat{v}$ we explicitly store in the sketch the shifted surrogate $s(\hat{v})$. Now, in order to compute $s(v)$ of any given node $v$, we need to trace the $\tau$-predecessors backward at most $K$ times until we reach a landmark nodes whose shifted surrogate is known. The computation time per node is $O(d)$, so in total, the resulting estimation time is $O(\log\Phi + dK)$.

It remains to verify that storing the shifted surrogates for the landmark nodes does not asymptotically increase the sketch size. To this end, fix a landmark node $\hat{v}$. Recall that the shifted surrogates are defined recursively, starting at $\mathbf0$ for the subtree root, and then in each step adding a vector of the form $\delta(v)^{-1}2^{\ell(v)}\eta(v)$ (see \Cref{sec:estimation}). Since $\eta(v)$ is a point on a grid with side either $\delta'(v)/d^{1/p}=\delta(v)\epsilon/d^{1/p}$ (if $v$ is a leaf in its subtree) or $\delta(v)/d^{1/p}$ (otherwise), we see that each step adds an integer multiple of either $\epsilon/d^{1/p}$ or $1/d^{1/p}$ to each coordinate of $s(\hat{v})$. On the other hand, since $\mathbf0$ is also a shifted surrogate (of the center of the root of the subtree in which $\hat{v}$ is present), we must have $\norm{s(\hat{v})-\mathbf0}\leq(1+\epsilon)\Phi$, and in particular each coordinate of $s(\hat{v})$ is bounded by $2\Phi$. Together, we see that each coordinate of $s(\hat{v})$ can be represented with $\lceil\log(2\Phi\cdot \epsilon^{-1}\cdot d^{1/p})\rceil=K$ bits. Multiplying by $d$ coordinates, we find that $O(dK)$ bits suffice to fully store any shifted surrogate. Since we are storing them for $O(|T|/K)$ landmark nodes, we spend additional $O(d|T|)=O(nd\log(1/\epsilon))$ bits, which does not asymptotically change the sketch size.

\section{Lower Bounds}

\begin{theorem}[Euclidean metrics]\label{thm:euclidean_lowerbound}
Fix $\gamma>0$. If $\epsilon\geq1/n^{0.5-\gamma}$, then
\[ b_{1+\epsilon}(\mathcal{D}_2(n,d,\Phi))=\Omega(\gamma\cdot\epsilon^{-2}n\log n+n\log\log \Phi) . \]
\end{theorem}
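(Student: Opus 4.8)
The plan is to derive the two summands from two separate families of hard instances, each lower-bounding $b:=b_{1+\epsilon}(\mathcal D_2(n,d,\Phi))$; since $\max(A,B)\geq\tfrac12(A+B)$, combining them gives the claimed sum, so we may treat each separately. The engine in both cases is the standard incompressibility bound: if $\mathcal F\subseteq\mathcal D_2(n,d,\Phi)$ is a family of metrics in which no two distinct $D,D'$ satisfy $D(x,y)/D'(x,y)\in[\tfrac1{1+\epsilon},1+\epsilon]$ for all $x,y$, then $b\geq\log_2|\mathcal F|-O(1)$. Indeed, a successful execution of $\mathsf{Summ}$ on $D$ produces $S_D$ for which $\mathsf{Est}(S_D,\cdot,\cdot)$ is within a $(1+\epsilon)$ factor of $D$ entrywise, so $S_D=S_{D'}$ would force $D$ and $D'$ to be mutually within a $(1+\epsilon)$ factor, which is excluded; hence $D\mapsto S_D$ is injective on $\mathcal F$ and $2^b\geq|\mathcal F|$. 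For randomized $\mathsf{Summ}$ one runs the same argument against the uniform distribution on $\mathcal F$, conditioned on success, losing only an additive constant. Throughout we assume $\epsilon<1$.

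For the $n\log\log\Phi$ term I would use a star configuration. Fix unit vectors $u_1,\dots,u_{n-1}\in\R^d$ with $\norm{u_i-u_j}\geq1$ for all $i\neq j$; such vectors exist once $d\geq C\log n$ (for instance, $n$ i.i.d.\ uniform points on the sphere are pairwise $(\sqrt2-o(1))$-separated with high probability), which is within the range $d=\Theta(\epsilon^{-2}\log n)$ of interest, and otherwise one pads with zeros. Place one point at the origin and, for $i<n$, the point $\rho_iu_i$, where each $\rho_i$ is chosen independently from $\{2^0,2^1,\dots,2^L\}$ with $L=\lfloor\log_2(\Phi/2)\rfloor$. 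Any such configuration has minimum interpoint distance at least $1$ (if $\rho_i=\rho_j$ then $\norm{\rho_iu_i-\rho_ju_j}=\rho_i\norm{u_i-u_j}\geq1$; otherwise the larger radius is at least twice the smaller, so the distance is at least $\rho_{\max}-\rho_{\min}\geq\rho_{\min}\geq1$) and diameter at most $2^{L+1}\leq\Phi$, so it lies in $\mathcal D_2(n,d,\Phi)$. Since $\epsilon<1$, a $(1+\epsilon)$-approximation to the distance $\rho_i$ from the origin to point $i$ determines $\rho_i$ exactly, so distinct choices of $(\rho_1,\dots,\rho_{n-1})$ yield metrics that are not mutually $(1+\epsilon)$-close. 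For $\Phi\geq4$ this family has size $(L+1)^{n-1}=2^{\Omega(n\log\log\Phi)}$ (for $\Phi<4$ the term is $O(n)$ and dominated by the other summand), so $b=\Omega(n\log\log\Phi)$.

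For the $\gamma\,\epsilon^{-2}n\log n$ term I would reduce to the lower bound of Alon and Klartag~\cite{AlonK16} accompanying the upper bound cited earlier: any scheme that compresses $n$ vectors of norm at most $1$ in $\R^{O(\epsilon^{-2}\log n)}$ so that all pairwise distances are later recoverable to additive error $\epsilon$ must use $\Omega(\epsilon^{-2}n\log n)$ bits, provided $\epsilon$ is not too small. Given such a hard instance $u_1,\dots,u_n$ (take it to be well separated, e.g.\ near-orthonormal random vectors, so all pairwise distances are $\Theta(1)$), set $D(i,j)=\norm{u_i-u_j}_2$; since all distances lie in $[\Theta(1),2]$, a distortion-$(1+\epsilon)$ sketch of $D$ recovers every distance to additive $O(\epsilon)$, and $\mathrm{spread}(D)=O(1)\leq\Phi$, so $b=\Omega(\epsilon^{-2}n\log n)$ after rescaling $\epsilon$ by a constant. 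The only points to check are this spread normalization (and, if one prefers an inner-product formulation, the identity $\langle u_i,u_j\rangle=1-\tfrac12\norm{u_i-u_j}_2^2$), and that $\epsilon\geq n^{-1/2+\gamma}$ places $\epsilon$ in the range where the bound is tight — this is where the factor $\gamma$ originates, since $\epsilon\geq n^{-1/2+\gamma}$ is exactly the condition $\gamma\epsilon^{-2}\log n\ll n$ that one needs in order to build the requisite separated family directly (the alternative route, which I expect to be the harder one, via a gadget whose $\Theta(\gamma\epsilon^{-2}\log n)$ auxiliary ``probe'' points must fit within the $n$ available labels).

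The weight of the proof is entirely on the second family; everything for the first term is elementary. The source of the difficulty is that in dimension $\Theta(\epsilon^{-2}\log n)$ almost all pairs of bounded-spread configurations have distance matrices that agree up to a $(1\pm o(\epsilon))$ factor — moving a single point registers on its distances to fixed points only as a $\Theta(1/\sqrt d)$-scale relative perturbation — so a separated family cannot be generic and must be engineered so that small, local changes are nonetheless forced to show up multiplicatively in some distance. I would therefore offload that step onto~\cite{AlonK16} and carry out only the (routine, in the stated range of $\epsilon$) distance/inner-product conversion and bounded-spread normalization myself. Combining the two bounds via $\max\geq\tfrac12(\cdot+\cdot)$ then yields $b=\Omega(\gamma\,\epsilon^{-2}n\log n+n\log\log\Phi)$, matching $\cref{cor:euclidean}$ up to the $\log(1/\epsilon)$ factor.
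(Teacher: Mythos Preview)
Your outline is correct, but the route you take for the main $\epsilon^{-2}n\log n$ term differs from the paper's, and you overestimate its difficulty. The paper does not invoke~\cite{AlonK16}; it gives a short self-contained construction: take the $n$ standard basis vectors $e_1,\dots,e_n$ together with $n$ arbitrary $k$-sparse $\{0,1\}$-vectors $a_1,\dots,a_n$ scaled by $1/\sqrt{k}$, where $k=1/\epsilon^2$. Then $\norm{\tfrac{1}{\sqrt k}a_j-e_i}_2^2=2-2\epsilon\,a_j(i)$, so a $(1\pm\tfrac12\epsilon)$-distortion sketch recovers every bit $a_j(i)$, and counting $\binom{n}{k}^n$ choices gives $\Omega(n\epsilon^{-2}\log(n\epsilon^2))=\Omega(\gamma\,\epsilon^{-2}n\log n)$. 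This is exactly the ``probe'' gadget you allude to at the end, and it is three lines, not the hard step you anticipate. Your black-box reduction to~\cite{AlonK16} is also valid, but it is not free: you must check that the Alon--Klartag hard family has all pairwise distances $\Theta(1)$ (so that multiplicative $(1+\epsilon)$ distortion yields additive $O(\epsilon)$ error and the spread is bounded). That family is in fact built from essentially the same sparse-vector/probe idea, so the property holds, but once you open that box you have reproduced the paper's argument.

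For the $n\log\log\Phi$ term both arguments work; the paper's is a bit simpler, placing the points on the real line ($g(1)=0$, $g(x)=2^{\phi(x)}$ for $\phi(x)\in\{1,\dots,\log\Phi\}$) rather than on a high-dimensional star, which avoids the need for well-separated unit vectors. For combining the two bounds, the paper takes a direct sum of the two embeddings into $\R^{n+1}$ to get a single family of size $|\mathcal F_1|\cdot|\mathcal F_2|$, whereas you use $\max\geq\tfrac12(\cdot+\cdot)$; both are fine.
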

\begin{proof}
Denote $k:=1/\epsilon^2$. Note that since $\epsilon>1/\sqrt{n}$, we may assume w.l.o.g.~that $k$ is an integer. Let $B$ be the set of standard basis vectors in $\R^n$, and let $a_1,\ldots,a_n$ be an arbitrary sequence of $k$-sparse vectors in $\B^n$ (note that we allow repetitions). Denote $A:=\{a_1,\ldots,a_n\}$. We sketch the Euclidean metric on the set $(\frac{1}{\sqrt k}A)\cup B$ up to distortion $1\pm\frac{1}{2}\epsilon$. We can also keep track of repeating elements in $a_1,\ldots,a_n$ using $n\log n$ bits (details omitted).

For every $a_j$ and $i\in\{1,\ldots,n\}$ we have $a_j^Te_i=a_j(i)$ and hence
\[ \norm{\frac{1}{\sqrt k}a_j-e_i}_2^2 = 2-\frac{2}{\sqrt k}a_j(i) = 2-2\epsilon a_j(i). \]
Since the sketch allows us to recover distances up to distortion $1\pm\frac{1}{2}\epsilon$, we can recover each entry $a_j(i)$ of each $a_j$, and hence the entire sequence $a_1,\ldots,a_n$. The number of choices for this sequence is ${n\choose k}^n$, so the lower bound we get on the sketch size in bits is
\ifprocs
\[
  \log\left({n\choose k}^n\right) \geq 
  nk\log\left(\frac{n}{k}\right) 
\]
\[
  = \frac{n}{\epsilon^2}\cdot\log(n\epsilon^2)
  = \Omega(\gamma\cdot\epsilon^{-2}n\log n),
\]
\else
\[
  \log\left({n\choose k}^n\right) \geq 
  nk\log\left(\frac{n}{k}\right) = 
  \frac{n}{\epsilon^2}\cdot\log(n\epsilon^2) =
  \Omega(\gamma\cdot\epsilon^{-2}n\log n),
\]
\fi
where the final bound is since $\log(n\epsilon^2)\geq\log(n^{2\gamma})=2\gamma\log n$.

Next we prove the lower bound $\Omega(n\log\log \Phi)$. Suppose w.l.o.g.~that $\log \Phi$ is an integer. Consider the point set $X=\{1,\ldots,n\}$. Define a map $f:X\rightarrow\R$ by setting $g(1):=0$, and for every $x\in X\setminus\{1\}$ setting $g(x):=2^{\phi(x)}$ for an arbitrary $\phi(x)\in\{1,\ldots,\log \Phi\}$. This defines a set of $(\log \Phi)^{n-1}$ one-dimensional Euclidean embeddings of $X$, each of which induces a metric contained in $\mathcal D_2(n,1,\Phi)$. We can fully recover a map $g$ from this family given a sketch with distortion better than $2$, since $D(1,x)=g(x)$ for every $x\in X$. Therefore, sketching those metrics requires at least $\log\left((\log \Phi)^{n-1}\right)=\Omega(n\log\log \Phi)$ bits.

To get the final lower bound $\Omega(\gamma\cdot\epsilon^{-2}n\log n+n\log\log \Phi)$, we augment the two metric famiilies constructed above into one. We constructed a family $\mathcal F_1$ of metrics embedded in $\R^n$, of size $|\mathcal F_1| \geq 2^{\Omega(\gamma\cdot\epsilon^{-2}n\log n)}$, and a family $\mathcal F_2$ of metrics embedded in $\R^1$, of size $|\mathcal F_2| \geq 2^{\Omega(n\log\log \Phi)}$. For every $D'\in\mathcal F_1$ and $D"\in\mathcal F_2$, we can naturally define a metric $D'\oplus D"$ embedded in $\R^{n+1}$ by embedding $D'$ in the first $n$ dimensions and $D"$ in the remaining dimension. This defines a family $\mathcal F := \{D'\oplus D" : D'\in\mathcal F_1, D"\in\mathcal F_2\}$ contained in $\mathcal D_2(2n,n+1,\Phi)$ of size $|\mathcal F_1|\cdot|\mathcal F_2|$, such that a $b$-bit sketching scheme with distortion $1+\epsilon$ can recover a metric from $\mathcal F$, and the lower bound $b=\Omega(\gamma\cdot\epsilon^{-2}n\log n+n\log\log \Phi)$ follows.
\ifprocs
\qedsymbol
\fi
\end{proof}

\begin{theorem}[general metrics]\label{thm:general_lowerbound}
\[ b_{1+\epsilon}(\mathcal D_{all}(n,\Phi)) = \Omega(n^2\log(1/\epsilon) + n\log\log \Phi) . \]
\end{theorem}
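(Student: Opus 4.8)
The plan is to prove the two lower bound terms separately and then combine them by an orthogonal-sum (direct-sum) construction exactly as in the proof of \cref{thm:euclidean_lowerbound}. The $\Omega(n\log\log\Phi)$ term is immediate: the construction from that proof used only a one-dimensional metric, which is in particular a general metric, so it carries over verbatim and shows $b_{1+\epsilon}(\mathcal D_{\mathrm{all}}(n,\Phi))=\Omega(n\log\log\Phi)$. The bulk of the work is the $\Omega(n^2\log(1/\epsilon))$ term, and here I would exhibit a large family of $n$-point metrics, all with spread at most $\Phi$ (in fact bounded spread independent of $\Phi$, say spread $O(1/\epsilon)$, which is harmless as long as $\Phi\geq 1/\epsilon$), that are pairwise distinguishable by any sketch with distortion $1+\epsilon$. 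The natural parametrization is to take metrics that on each of the $\binom n2$ pairs take one of roughly $\log(1/\epsilon)$ geometrically spaced values, i.e.\ each distance is of the form $(1+\epsilon)^{t}$ for $t\in\{0,1,\ldots,\Theta(\log(1/\epsilon))\}$; a sketch with distortion $1+\epsilon$ cannot confuse two distinct values of this form, so the metric is recovered exactly and the counting gives $\log\left((\log(1/\epsilon))^{\binom n2}\right)=\Omega(n^2\log\log(1/\epsilon))$ bits. But that only yields $\log\log(1/\epsilon)$, not $\log(1/\epsilon)$, so I need a cleverer family where the number of distinguishable choices \emph{per pair} is polynomial in $1/\epsilon$ rather than logarithmic.

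To get $\log(1/\epsilon)$ per pair I would instead make the distances on different pairs take values in widely separated scales, so that the multiplicative $(1+\epsilon)$ slack on a large distance does not overlap with the range of a small distance, letting me encode $\mathrm{poly}(1/\epsilon)$ bits into a single pair's distance by choosing it from a geometric grid of ratio $(1+\epsilon)$ spanning a constant-factor range — wait, that still only gives $\log(1/\epsilon)$ \emph{grid points} hence $\log\log(1/\epsilon)$ bits. The right move, matching the $\binom n2\log(1/\epsilon)$ bound, is: for each pair assign a value chosen from a set of $\Theta(1/\epsilon)$ candidates all within a factor $(1+\Theta(\epsilon))$ of each other — but these are too close to be distinguished. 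So the correct construction must be more subtle: I would build metrics on $n$ points where the pairwise distances, after scaling, realize an arbitrary subset-like structure. Concretely, following the philosophy of the Euclidean lower bound, I would place points so that one pair's distance encodes $\Theta(\log(1/\epsilon))$ bits by being forced (via triangle inequality with a fixed ``anchor'' gadget of a few points) to equal one of $2^{\Theta(\log(1/\epsilon))}=\mathrm{poly}(1/\epsilon)$ distinguishable values. Aggregating over $\Theta(n^2)$ independent pairs gives $\Omega(n^2\log(1/\epsilon))$ bits; the spread of the whole configuration is $\mathrm{poly}(1/\epsilon)\leq\Phi$ (assuming $\Phi$ is at least polynomial in $1/\epsilon$, which is the only interesting regime — otherwise $\log\log\Phi$ dominates anyway and the bound degrades gracefully).

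Thus the key steps, in order, are: (1) reduce to the regime $\Phi\geq\mathrm{poly}(1/\epsilon)$; (2) construct an explicit family $\mathcal F_1$ of $n$-point metrics of bounded spread with $\log|\mathcal F_1|=\Omega(n^2\log(1/\epsilon))$, such that any $(1+\epsilon)$-distortion sketch determines the metric uniquely — the heart is designing, for each pair, a local gadget (a few auxiliary points, or a direct specification of the distance value) so that the distance on that pair is recoverable to within a factor strictly less than the gap between consecutive candidate values, while keeping all distances consistent with \emph{some} metric (checking triangle inequalities is the routine-but-necessary part); (3) recall the family $\mathcal F_2$ of $\Omega(n\log\log\Phi)$-many one-dimensional metrics from the proof of \cref{thm:euclidean_lowerbound}; (4) take the direct sum $\mathcal F=\{D'\oplus D'':D'\in\mathcal F_1,D''\in\mathcal F_2\}$, placing $\mathcal F_2$ on a disjoint scale so spreads compose to $\mathrm{poly}(1/\epsilon)\cdot\Phi$ (rescale to stay within $\Phi$), verify $\mathcal F\subseteq\mathcal D_{\mathrm{all}}(O(n),\Phi)$, and conclude $b=\log|\mathcal F|=\Omega(n^2\log(1/\epsilon)+n\log\log\Phi)$. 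I expect step (2) — getting the full $\log(1/\epsilon)$ rather than $\log\log(1/\epsilon)$ per pair while maintaining the triangle inequality across all $\binom n2$ simultaneously varying distances — to be the main obstacle; the cleanest route is likely to let the distances on distinct pairs live on geometrically separated scales (so cross-pair triangle inequalities are automatic) and within each scale exploit an \emph{additive} encoding gadget (as in the Euclidean proof, where $\norm{\tfrac1{\sqrt k}a_j-e_i}^2=2-2\epsilon a_j(i)$ turned a multiplicative $1+\epsilon$ slack into the recovery of a $\mathrm{poly}(1/\epsilon)$-valued quantity), adapted to the purely metric setting.
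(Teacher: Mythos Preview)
Your proposal contains a concrete arithmetic slip that sends you down an unnecessarily complicated path. You write that ``a geometric grid of ratio $(1+\epsilon)$ spanning a constant-factor range \ldots\ only gives $\log(1/\epsilon)$ grid points.'' This is off by an exponential: the number of points $(1+\epsilon)^0,(1+\epsilon)^1,\ldots$ in $[1,2]$ is $\log 2/\log(1+\epsilon)=\Theta(1/\epsilon)$, not $\Theta(\log(1/\epsilon))$. So that very construction already yields $\Theta(1/\epsilon)$ distinguishable values per pair and hence $\Omega(n^2\log(1/\epsilon))$ bits --- no gadgets, no scale separation, no anchor points needed.

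The paper's proof is in fact even more direct than that. It takes the \emph{arithmetic} grid: for every pair $x<y$ set $d(x,y)=1+k(x,y)\cdot\epsilon$ with $k(x,y)\in\{0,1,\ldots,1/\epsilon\}$ chosen arbitrarily. All $\binom n2$ distances lie in $[1,2]$, and this alone forces the triangle inequality ($d(x,z)\leq 2\leq d(x,y)+d(y,z)$), so \emph{every} choice of the $k(x,y)$'s gives a genuine metric --- the concern you flagged as ``the main obstacle'' simply evaporates. Since all distances are at most $2$, a sketch with multiplicative distortion $1\pm\tfrac12\epsilon$ has additive error below $\epsilon$ and recovers each $k(x,y)$; counting gives $\log\bigl((1/\epsilon)^{\binom n2}\bigr)=\Omega(n^2\log(1/\epsilon))$. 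The $\Omega(n\log\log\Phi)$ term is handled exactly as you say, by reusing the one-dimensional family from \cref{thm:euclidean_lowerbound}.

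In short: your outline for combining the two terms is fine, but step~(2) is a two-line argument once you notice that keeping every distance inside a fixed interval $[1,2]$ trivializes the triangle inequality and already admits $\Theta(1/\epsilon)$ values per pair that a $(1+O(\epsilon))$-distortion sketch must separate.
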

\begin{proof}
Let $\epsilon>0$ and suppose w.l.o.g.~$1/\epsilon$ is an integer. Recall we use the convention $X=\{1,\ldots,n\}$. For every $x,y\in X$, $x<y$ set $d(x,y)=1+k(x,y)\cdot\epsilon$ for an arbitrary $k(x,y)\in\{0,\ldots,1/\epsilon\}$. This actually defines a metric regardless of the choice of $k$'s: we only need to verify the triangle inequality, and it holds trivially since all pairwise distances are lower-bounded by $1$ and upper-bounded by $2$. Hence we have defined a family of $(1/\epsilon)^{{n\choose2}}$ metrics. Next observe that a sketch with distortion $(1\pm\frac{1}{2}\epsilon)$ is sufficient to fully recover a metric from this family, which proves a lower bound of $\log\left((1/\epsilon)^{{n\choose2}}\right)=\Omega(n^2\log(1/\epsilon))$ on the sketch size in bits. The other lower bound $\Omega(n\log\log \Phi)$ is by the same proof as \cref{thm:euclidean_lowerbound}.
\ifprocs
\qedsymbol
\fi
\end{proof}

\paragraph{Acknowledgments.} We thank Arturs Backurs, Sepideh Mahabadi and Ilya Razenshteyn for helpful feedback on this manuscript. This work was supported in part by the NSF, MADALGO and the Simons Foundation.

\bibliographystyle{amsalpha}
\bibliography{metriccompression}

\appendix

\section{$\ell_1$ Metrics}\label{sec:l1_metrics}
In this section we point out that in our setting, both upper and lower bounds for Euclidean metrics apply to $\ell_1$ metrics as well. In particular,
\begin{corollary}
\ifprocs
\[ b_{1+\epsilon}(\mathcal D_1(n,d,\Phi)) = \]
\[ O(\epsilon^{-2}\log(1/\epsilon) \cdot n\log n + n\log\log \Phi) , \]
\else
\[ b_{1+\epsilon}(\mathcal D_1(n,d,\Phi)) = O(\epsilon^{-2}\log(1/\epsilon) \cdot n\log n + n\log\log \Phi) , \]
\fi
and
\[ b_{1+\epsilon}(\mathcal D_1(n,d,\Phi)) = \Omega(\epsilon^{-2}n\log n + n\log\log \Phi) . \]
\end{corollary}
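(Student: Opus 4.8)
The plan is to derive both bounds from the corresponding Euclidean results via classical norm-embedding facts, so that essentially no new geometric argument is needed. For the upper bound, I would first recall that any $d$-dimensional $\ell_1$-metric on $n$ points embeds isometrically (or with arbitrarily small distortion, which suffices here) into $\ell_2$: the simplest route is to use the fact that $\ell_1$ embeds into $\ell_2$ with distortion $1+\epsilon'$ on any finite set via a random projection / stable-distribution style argument, or, if one wants an \emph{isometric} statement, via the embedding of $\ell_1^d$ into $L_2$ and then Johnson–Lindenstrauss back down to finite dimension. Either way, given an $\ell_1$-metric $D$ with spread at most $\Phi$, we obtain an $\ell_2$-metric $\tilde D$ on the same $n$ points with $D(x,y)\le\tilde D(x,y)\le(1+\epsilon')D(x,y)$ and spread at most $(1+\epsilon')\Phi$. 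Applying \cref{cor:euclidean} to $\tilde D$ with accuracy parameter $\epsilon/3$ and absorbing the constant factors, and noting $\log\log((1+\epsilon')\Phi)=O(\log\log\Phi)$, yields a sketch of size $O(\epsilon^{-2}\log(1/\epsilon)\cdot n\log n + n\log\log\Phi)$ with distortion $1+\epsilon$ for $D$. One must be slightly careful that the composed distortion stays within $1+\epsilon$ and that the embedding is applied \emph{inside} the summary algorithm (the estimation algorithm just runs the Euclidean estimator), but this is routine.

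For the lower bound, the cleanest approach is to observe that the hard instances used in the proof of \cref{thm:euclidean_lowerbound} are themselves $\ell_1$-metrics up to a harmless rescaling, so the same counting argument applies directly. Concretely, in that proof the point set is $(\tfrac{1}{\sqrt k}A)\cup B$ with $A$ consisting of $k$-sparse $0/1$ vectors and $B$ the standard basis; the key computation was $\|\tfrac{1}{\sqrt k}a_j-e_i\|_2^2=2-2\epsilon a_j(i)$. In $\ell_1$ the analogous quantity $\|\tfrac{1}{k}a_j-e_i\|_1$ (rescaling the sparse vectors by $1/k$ instead of $1/\sqrt k$) equals $(k-1)/k + |1-1/k|\cdot[\text{depending on }a_j(i)]$, which still moves by a fixed $\Theta(1/k)=\Theta(\epsilon^2)$ amount as $a_j(i)$ toggles between $0$ and $1$; since the base distance is $\Theta(1)$, a $(1\pm c\epsilon)$-distortion sketch again recovers every bit $a_j(i)$, giving the same $\binom{n}{k}^n$ information content and hence $\Omega(\gamma\epsilon^{-2}n\log n)$ bits — and one should state it for, say, $\epsilon$ bounded below by a polynomial in $1/n$ exactly as in \cref{thm:euclidean_lowerbound}, or simply absorb the $\gamma$ into the statement as written. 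The $\Omega(n\log\log\Phi)$ term transfers verbatim: the one-dimensional construction $g(x)\in\{0\}\cup\{2^{\phi(x)}\}$ is simultaneously an $\ell_1$- and $\ell_2$-metric, so the same $(\log\Phi)^{n-1}$-size family works. Combining the two sub-families by direct sum into $\R^{n+1}$ as in \cref{thm:euclidean_lowerbound} completes it.

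The main obstacle — really the only place requiring care — is the upper-bound direction: one needs an $\ell_1\hookrightarrow\ell_2$ embedding that is (i) low-distortion, (ii) applicable to an arbitrary $n$-point subset of $\ell_1^d$, and (iii) compatible with the spread bound so that $\log\log\Phi$ is preserved. The standard Cauchy/$p$-stable embedding gives distortion $1+\epsilon'$ on finite sets but with a blow-up in dimension that is irrelevant here since \cref{thm:main} and \cref{cor:euclidean} already allow arbitrary $d$ (the Euclidean corollary internally reduces $d$ via Johnson–Lindenstrauss). Alternatively, and perhaps more elegantly, one can skip $\ell_2$ altogether and apply \cref{thm:main} directly: embed the $n$-point $\ell_1^d$ metric into $\ell_1^{O(\epsilon^{-2}\log n)}$ via an $\ell_1$-analogue of dimension reduction, but since no isometric $\ell_1$ dimension reduction exists, routing through $\ell_2$ is the safer path. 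I would therefore present the $\ell_2$ route, flag the $1+\epsilon'$ distortion bookkeeping, and leave the remaining computations as ``routine.''
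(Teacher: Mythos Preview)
Both directions of your proposal contain a genuine gap, and in fact you have the embedding directions reversed.

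\textbf{Upper bound.} Your premise that ``$\ell_1$ embeds into $\ell_2$ with distortion $1+\epsilon'$ on any finite set'' is false: the Hamming cube $\{0,1\}^m$ with the $\ell_1$ metric requires distortion $\Omega(\sqrt{m})$ into any Hilbert space (Enflo). Stable-distribution projections go the other way---Gaussian projections embed $\ell_2$ into $\ell_1$, while Cauchy projections give $\ell_1$ sketches but not a metric embedding into $\ell_2$. The paper's route is different and does work: every $\ell_1$ metric is of \emph{negative type}, i.e.\ $\sqrt{D}$ embeds isometrically into $\ell_2$. One sketches the Euclidean metric $\sqrt{D}$ (whose spread is $\sqrt{\Phi}$, so $\log\log$ is unchanged) with distortion $1+\Theta(\epsilon)$ and squares the output. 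Your alternative of applying \cref{thm:main} directly with $p=1$ after $\ell_1$ dimension reduction is also blocked, since low-distortion $\ell_1$ dimension reduction to $O(\epsilon^{-2}\log n)$ dimensions does not exist (Brinkman--Charikar).

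\textbf{Lower bound.} Your direct $\ell_1$ computation does not give what you claim. With the $1/k$ scaling you propose, $\|\tfrac{1}{k}a_j-e_i\|_1$ equals $2$ when $a_j(i)=0$ and $2-\tfrac{2}{k}$ when $a_j(i)=1$: the gap is $\Theta(1/k)=\Theta(\epsilon^2)$ on a $\Theta(1)$ base distance, so a $(1\pm c\epsilon)$-distortion sketch has additive error $\Theta(\epsilon)\gg\Theta(\epsilon^2)$ and \emph{cannot} recover the bit $a_j(i)$. (Other scalings do not help; in $\ell_1$ there is no squared-norm identity to produce a $\Theta(\epsilon)$ inner-product term.) The paper instead uses the isometric embedding of $\ell_2$ into $\ell_1$: the hard Euclidean instances from \cref{thm:euclidean_lowerbound} are, after this embedding, $\ell_1$ instances on which the Euclidean lower bound applies verbatim. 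Your treatment of the $\Omega(n\log\log\Phi)$ term is fine.
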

\begin{proof}
The upper bound follows from \cref{cor:euclidean} since every $\ell_1$-metric is of negative type, meaning it embeds isometrically into $\ell_2^2$. Then it is enough to sketch the underlying Euclidean metric. The lower bound follows from 
\cref{thm:euclidean_lowerbound} since $\ell_2$ metrics embed isometrically into $\ell_1$.
\ifprocs
\qedsymbol
\fi
\end{proof}

\section{Grid Nets}\label{sec:gridnet}
In this section we prove \cref{lmm:gridnet}. For $x\in\R^d$ and $r>0$, denote by $\mathcal{B}^d(x,r)$ the radius-$r$ ball centered at $x$ in the $\ell_p$ norm. Let $\mathcal{G}^d_\delta$ be the uniform grid with side $\delta d^{-1/p}$ in $\R^d$. The $\delta$-net for the ball would be its intersection with the grid,
\[ \mathcal{N}^d_\delta(x,r) := \mathcal{B}^d(x,r) \cap \mathcal{G}^d_\delta . \]
Clearly, given a point in the ball, we can find a $\delta$-close point in the net in time $O(d)$, by rounding each coordinate either up or down to an integer multiple of the grid side $\delta d^{-1/p}$. It remains to show how to encode and decode points in the net to bitstrings. For clarity, we present the proof for $p=2$; the analysis for any $1\leq p\leq\infty$ goes through with only a change of constants. Denote
\[ M^d_\delta(r) :=  \lceil \left(\frac{4\sqrt{\pi}\cdot r}{\delta}\right)^d \rceil . \]
We now show that $M^d_\delta(r)$ is an upper bound on the size of the net $\mathcal{N}^d_\delta(x,r)$.

\begin{fact}\label{fct:discrete_ball}
$\sum_{i=1}^m(m^2-i^2)^k \leq \sqrt{\frac{\pi}{2k}}\cdot m^{2k+1}$.
\end{fact}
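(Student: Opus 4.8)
The plan is to compare the sum with a Gaussian integral. First I would factor out $m^{2k}$ from every term, writing $\sum_{i=1}^m (m^2-i^2)^k = m^{2k}\sum_{i=1}^m \bigl(1-(i/m)^2\bigr)^k$, which reduces the claim to showing $\sum_{i=1}^m \bigl(1-(i/m)^2\bigr)^k \le \sqrt{\pi/(2k)}\cdot m$. The function $g(x) := (1-x^2)^k$ is nonnegative and nonincreasing on $[0,1]$, so each rectangle $[(i-1)/m,\,i/m]\times[0,\,g(i/m)]$ sits below the graph of $g$; summing, the right-endpoint Riemann sum underestimates the integral, giving $\tfrac1m\sum_{i=1}^m g(i/m) \le \int_0^1 g(x)\,dx$, i.e.\ $\sum_{i=1}^m \bigl(1-(i/m)^2\bigr)^k \le m\int_0^1 (1-x^2)^k\,dx$.

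Next I would bound the integral. Since $1-t\le e^{-t}$ for every real $t$, and $1-x^2\ge 0$ for $x\in[0,1]$, raising to the $k$-th power preserves the inequality: $(1-x^2)^k \le e^{-kx^2}$ on $[0,1]$. Hence $\int_0^1 (1-x^2)^k\,dx \le \int_0^\infty e^{-kx^2}\,dx = \tfrac12\sqrt{\pi/k} \le \sqrt{\pi/(2k)}$, using $\pi/(4k)\le\pi/(2k)$ in the last step. Combining with the previous paragraph, $\sum_{i=1}^m (m^2-i^2)^k \le m^{2k}\cdot m\cdot\sqrt{\pi/(2k)} = \sqrt{\pi/(2k)}\cdot m^{2k+1}$, which is the stated bound (in fact with a factor $\sqrt2$ to spare).

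There is no genuinely hard step here; the only points that require a moment's care are the direction of the Riemann-sum inequality — which is exactly why monotonicity of $g$ on $[0,1]$ is invoked — and the elementary bound $(1-x^2)^k\le e^{-kx^2}$ on the relevant range. One could instead evaluate $\int_0^1(1-x^2)^k\,dx$ exactly via the substitution $x=\sin\theta$ as the Wallis integral $\int_0^{\pi/2}\cos^{2k+1}\theta\,d\theta = \tfrac{(2k)!!}{(2k+1)!!}$ and invoke a standard asymptotic bound for it, but the Gaussian comparison is shorter and already yields the constant in the statement.
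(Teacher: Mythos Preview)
Your proof is correct and follows the same outline as the paper's: factor out $m^{2k}$, bound the right-endpoint Riemann sum of the decreasing function $(1-x^2)^k$ by $\int_0^1(1-x^2)^k\,dx$, then bound the integral by $\sqrt{\pi/(2k)}$. The only difference is in that last step: the paper evaluates the integral exactly as $\tfrac{\sqrt\pi}{2}\cdot\Gamma(k+1)/\Gamma(k+\tfrac32)$ and then invokes a Gamma-ratio bound, whereas you use the inequality $(1-x^2)^k\le e^{-kx^2}$ and the Gaussian integral. Your route is slightly more elementary (no special functions needed) and, as you note, even gains a factor $\sqrt2$; the paper's route is the ``exact evaluation'' alternative you mention in your closing remark.
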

\begin{proof}
\ifprocs
\[
  \frac{1}{m^{2k+1}}\sum_{i=1}^m(m^2-i^2)^k = 
  \sum_{i=1}^m\left(1-\left(\frac{i}{m}\right)^2\right)^k\frac{1}{m} \leq
\]
\[
  \int_0^1(1-x^2)^kdx =
  \frac{\sqrt\pi}{2}\cdot\frac{\Gamma(k+1)}{\Gamma(k+1.5)} \leq
  \sqrt{\frac{\pi}{2k}}. \text{\;\;\qedsymbol}
\]
\else
\[
  \frac{1}{m^{2k+1}}\sum_{i=1}^m(m^2-i^2)^k = 
  \sum_{i=1}^m\left(1-\left(\frac{i}{m}\right)^2\right)^k\frac{1}{m} \leq
  \int_0^1(1-x^2)^kdx =
  \frac{\sqrt\pi}{2}\cdot\frac{\Gamma(k+1)}{\Gamma(k+1.5)} \leq
  \sqrt{\frac{\pi}{2k}}.
\]
\fi
\end{proof}

\begin{claim}\label{clm:sumofballs}
\[ M^d_\delta(r) \geq \sum_{i=-\lfloor r\sqrt{d}/\delta \rfloor}^{\lfloor r\sqrt{d}/\delta \rfloor}M^{d-1}_\delta\left(\sqrt{r^2 - (\tfrac{\delta}{\sqrt{d}}i)^2}\right). \]
\end{claim}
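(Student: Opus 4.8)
The plan is to prove \cref{clm:sumofballs} by a volumetric/slicing argument, comparing the grid $\mathcal{G}^d_\delta$ intersected with the ball $\mathcal B^d(x,r)$ to a union of parallel $(d-1)$-dimensional slices. Concretely, fix the last coordinate to be an integer multiple $\frac{\delta}{\sqrt d}i$ of the grid side; for a point of $\mathcal N^d_\delta(x,r)$ with this last coordinate, the remaining $d-1$ coordinates lie in the $(d-1)$-dimensional $\ell_2$ ball of radius $\sqrt{r^2-(\frac{\delta}{\sqrt d}i)^2}$ (centered appropriately), intersected with the $(d-1)$-dimensional grid $\mathcal G^{d-1}_\delta$, which is exactly the net $\mathcal N^{d-1}_\delta(\cdot,\sqrt{r^2-(\frac{\delta}{\sqrt d}i)^2})$. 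Since the last coordinate ranges over $i$ with $|\frac{\delta}{\sqrt d}i|\le r$, i.e.\ $|i|\le \lfloor r\sqrt d/\delta\rfloor$, summing the slice sizes gives $|\mathcal N^d_\delta(x,r)| \le \sum_{i=-\lfloor r\sqrt d/\delta\rfloor}^{\lfloor r\sqrt d/\delta\rfloor} |\mathcal N^{d-1}_\delta(\cdot, \sqrt{r^2-(\frac{\delta}{\sqrt d}i)^2})|$. So it suffices to prove that $M^d_\delta(r)$ dominates this sum; by an inductive hypothesis $M^{d-1}_\delta(\rho)$ upper bounds $|\mathcal N^{d-1}_\delta(\cdot,\rho)|$, but actually what the claim literally asks is the cleaner inequality $M^d_\delta(r)\ge \sum_i M^{d-1}_\delta(\sqrt{r^2-(\frac{\delta}{\sqrt d}i)^2})$, which we isolate and prove directly.

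To prove this inequality, I would substitute the definition $M^d_\delta(r)=\lceil(4\sqrt\pi\, r/\delta)^d\rceil$ and drop the ceilings on the right-hand side summands by bounding $\lceil z\rceil \le z+1$ — but more cleanly, I would first reduce to the case without ceilings by noting that the right side has at most $2\lfloor r\sqrt d/\delta\rfloor+1 \le 2r\sqrt d/\delta + 1$ terms, so the total contribution of the ``$+1$''s from the ceilings is at most $O(r\sqrt d/\delta)$, which is absorbed by the ceiling on the left. The main computation is then to show
\[ \left(\frac{4\sqrt\pi\, r}{\delta}\right)^d \;\gtrsim\; \sum_{i=-m}^{m}\left(\frac{4\sqrt\pi}{\delta}\right)^{d-1}\!\left(r^2-(\tfrac{\delta}{\sqrt d}i)^2\right)^{(d-1)/2}, \]
where $m=\lfloor r\sqrt d/\delta\rfloor$. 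Dividing both sides by $(4\sqrt\pi/\delta)^{d-1}$, the right side becomes $(4\sqrt\pi/\delta)\sum_{i=-m}^m (r^2-(\frac{\delta}{\sqrt d}i)^2)^{(d-1)/2}$, and setting $k=(d-1)/2$ this is essentially $2\sum_{i=1}^m (r^2-(\frac{\delta}{\sqrt d}i)^2)^k$ plus the $i=0$ term $r^{2k}$. Pulling out $(\frac{\delta}{\sqrt d})^{2k}$ turns the sum into $(\frac{\delta}{\sqrt d})^{2k}\sum_{i=1}^m((\frac{r\sqrt d}{\delta})^2 - i^2)^k$ — except $\frac{r\sqrt d}{\delta}$ need not be an integer, so I would replace $m$ by $\lceil r\sqrt d/\delta\rceil$ or simply note $m^2 - i^2 \le (r\sqrt d/\delta)^2 - i^2$ for the relevant range and apply \cref{fct:discrete_ball} with this value; after the dust settles, \cref{fct:discrete_ball} gives $\sum_{i=1}^m((\frac{r\sqrt d}{\delta})^2-i^2)^k \le \sqrt{\frac{\pi}{2k}}(\frac{r\sqrt d}{\delta})^{2k+1}$, so the whole right side is at most (up to the lower-order $i=0$ term and the ceiling slack) $\frac{4\sqrt\pi}{\delta}\cdot 2\sqrt{\frac{\pi}{2k}}\cdot (\frac{\delta}{\sqrt d})^{2k}(\frac{r\sqrt d}{\delta})^{2k+1} = \frac{8\pi}{\sqrt{2k}}\cdot\frac{r^{2k+1}\sqrt d}{\delta}\cdot(\frac{\sqrt d}{\delta})^{2k}/(\frac{\sqrt d}{\delta})^{2k} $; tracking the powers of $\delta$ and $d$ carefully, this simplifies to roughly $\frac{8\pi\sqrt d}{\sqrt{2(d-1)}}\cdot\frac{r^d}{\delta^{... }}$ type expression that one checks is $\le (4\sqrt\pi\, r/\delta)^d$ since $4\sqrt\pi = (4\sqrt\pi)^1$ and we have $d$ factors of $4\sqrt\pi$ on the left but the right produces far fewer, and the constant $8\pi/\sqrt{2(d-1)} \le 4\sqrt\pi$ for $d$ large (and small $d$ handled directly).

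I expect the main obstacle to be bookkeeping rather than conceptual: correctly handling the non-integrality of $r\sqrt d/\delta$, the ceiling functions in the definition of $M^d_\delta$, and the boundary term $i=0$, and verifying that the constant $4\sqrt\pi$ is chosen large enough to absorb the constant $\sqrt{\pi/(2k)}$ coming out of \cref{fct:discrete_ball} together with the factors of $2$ (from symmetrizing $\sum_{i=-m}^m = 2\sum_{i=1}^m + [i=0]$) and the one extra factor of $4\sqrt\pi/\delta$ from the dimension drop; the slack from $\sqrt{\pi/(2k)}\to 0$ as $d\to\infty$ is what makes this work, so the delicate point is the small-$d$ base cases ($d=1$ is immediate since $M^1_\delta(r)=\lceil 4\sqrt\pi r/\delta\rceil$ obviously bounds the number of grid points $\lfloor r\sqrt 1/\delta\rfloor\cdot 2+1$ in an interval of length $2r$). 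Once \cref{clm:sumofballs} is established, combined with the slicing observation above and induction on $d$, it yields $|\mathcal N^d_\delta(x,r)| \le M^d_\delta(r)$, which is the size bound needed for \cref{lmm:gridnet}; the encoding is then just writing down the index of a net point among the $\le M^d_\delta(r)$ possibilities (enumerated slice by slice, recursively), giving $O(d\log(1/\delta))$ bits, and decoding reverses this enumeration, with the claimed running times following from the $O(r\sqrt d/\delta)$ bound on the number of slices at each level of the recursion.
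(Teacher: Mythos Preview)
Your approach is essentially the paper's: bound the ceilings by $\lceil z\rceil\le z+1$ (contributing at most $\lfloor r\sqrt d/\delta\rfloor$ in total from the one-sided sum), pull out the constants, and apply \cref{fct:discrete_ball} with $m=\lceil r\sqrt d/\delta\rceil$ to control $\sum_{i=1}^{m}(m^2-i^2)^{(d-1)/2}$. The paper carries the arithmetic through cleanly to show the one-sided sum is at most $\tfrac13 M^d_\delta(r)$, so that twice it plus the $i=0$ term $M^{d-1}_\delta(r)$ is at most $M^d_\delta(r)$.

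Two remarks. First, your opening paragraph (the slicing argument $|\mathcal N^d_\delta|\le\sum_i|\mathcal N^{d-1}_\delta|$) and your closing paragraph (encoding/decoding) are not part of this claim's proof; in the paper they are the content of \cref{cor:gridnetsize} and the paragraph after it, which \emph{use} \cref{clm:sumofballs}. The claim itself is purely the algebraic inequality between the $M$ values. Second, your algebra after dividing by $(4\sqrt\pi/\delta)^{d-1}$ slips: the surviving factor $4\sqrt\pi/\delta$ belongs on the \emph{left} side (giving $(4\sqrt\pi/\delta)\cdot r^d$), not multiplying the sum on the right, and the computation trails off into ``$\delta^{\ldots}$''. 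If you redo it as the paper does---factor out $(\delta/\sqrt d)^{d-1}$ from the radicand, apply \cref{fct:discrete_ball}, and compare to $(4\sqrt\pi r/\delta)^d$---the constants line up without any asymptotic hand-waving, and no separate small-$d$ case analysis is needed.
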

\begin{proof}
\ifprocs
\[ \sum_{i=1}^{\lfloor r\sqrt{d}/\delta \rfloor}M^{d-1}_\delta\left(\sqrt{r^2 - (\tfrac{\delta}{\sqrt{d}}i)^2}\right) \]
\[ \leq \lfloor\frac{r\sqrt{d}}{\delta}\rfloor + \sum_{i=1}^{\lfloor r\sqrt{d}/\delta \rfloor}\left(\frac{4\sqrt{\pi}\cdot \sqrt{r^2 - (\tfrac{\delta}{\sqrt{d}}i)^2}}{\delta}\right)^{d-1} \]
\[ = \lfloor\frac{r\sqrt{d}}{\delta}\rfloor + \left(\frac{4\sqrt\pi}{\sqrt d}\right)^{d-1}\sum_{i=1}^{\lfloor r\sqrt{d}/\delta \rfloor}\left( \left(\frac{r\sqrt{d}}{\delta}\right)^2 - i^2 \right)^{\frac{d-1}{2}} \]
\[ \leq \lfloor\frac{r\sqrt{d}}{\delta}\rfloor + \left(\frac{4\sqrt\pi}{\sqrt d}\right)^{d-1}\sum_{i=1}^{\lceil r\sqrt{d}/\delta \rceil}\left( \left(\lceil\frac{r\sqrt{d}}{\delta}\rceil\right)^2 - i^2 \right)^{\frac{d-1}{2}} \]
\[ \leq \lfloor\frac{r\sqrt{d}}{\delta}\rfloor + \left(\frac{4\sqrt\pi}{\sqrt d}\right)^{d-1} \cdot \sqrt{\frac{\pi}{d-1}} \cdot \left(\lceil\frac{r\sqrt{d}}{\delta}\rceil\right)^d \]
\[ \leq \frac{1}{3}\left(\frac{4\sqrt{\pi}\cdot r}{\delta}\right)^d \leq \frac{1}{3}M^d_\delta(r) , \]
where we have used \cref{fct:discrete_ball} to bound the sum. Therefore, letting $r_i:=\sqrt{r^2 - (\tfrac{\delta}{\sqrt{d}}i)^2}$,
\[
  \sum_{i=-\lfloor r\sqrt{d}/\delta \rfloor}^{\lfloor r\sqrt{d}/\delta \rfloor}M^{d-1}_\delta(r_i) = 
\]
\[
  \sum_{i=-\lfloor r\sqrt{d}/\delta \rfloor}^{-1}M^{d-1}_\delta(r_i) + M^{d-1}_\delta(r) + \sum_{i=1}^{\lfloor r\sqrt{d}/\delta \rfloor}M^{d-1}_\delta(r_i) 
\]
\[
  \leq M^d_\delta(r) . \textbf{\;\;\qedsymbol}
\]
\else
\begin{align*}
& \sum_{i=1}^{\lfloor r\sqrt{d}/\delta \rfloor}M^{d-1}_\delta\left(\sqrt{r^2 - (\tfrac{\delta}{\sqrt{d}}i)^2}\right) & \\
& \leq \lfloor\frac{r\sqrt{d}}{\delta}\rfloor + \sum_{i=1}^{\lfloor r\sqrt{d}/\delta \rfloor}\left(\frac{4\sqrt{\pi}\cdot \sqrt{r^2 - (\tfrac{\delta}{\sqrt{d}}i)^2}}{\delta}\right)^{d-1} & \\
& = \lfloor\frac{r\sqrt{d}}{\delta}\rfloor + \left(\frac{4\sqrt\pi}{\sqrt d}\right)^{d-1}\sum_{i=1}^{\lfloor r\sqrt{d}/\delta \rfloor}\left( \left(\frac{r\sqrt{d}}{\delta}\right)^2 - i^2 \right)^{\frac{d-1}{2}} & \\
& \leq \lfloor\frac{r\sqrt{d}}{\delta}\rfloor + \left(\frac{4\sqrt\pi}{\sqrt d}\right)^{d-1}\sum_{i=1}^{\lceil r\sqrt{d}/\delta \rceil}\left( \left(\lceil\frac{r\sqrt{d}}{\delta}\rceil\right)^2 - i^2 \right)^{\frac{d-1}{2}} & \\
& \leq \lfloor\frac{r\sqrt{d}}{\delta}\rfloor + \left(\frac{4\sqrt\pi}{\sqrt d}\right)^{d-1} \cdot \sqrt{\frac{\pi}{d-1}} \cdot \left(\lceil\frac{r\sqrt{d}}{\delta}\rceil\right)^d & \text{by \cref{fct:discrete_ball}} \\
& \leq \frac{1}{3}\left(\frac{4\sqrt{\pi}\cdot r}{\delta}\right)^d \leq \frac{1}{3}M^d_\delta(r) .
\end{align*}
Therefore, letting $r_i:=\sqrt{r^2 - (\tfrac{\delta}{\sqrt{d}}i)^2}$,
\[
  \sum_{i=-\lfloor r\sqrt{d}/\delta \rfloor}^{\lfloor r\sqrt{d}/\delta \rfloor}M^{d-1}_\delta(r_i) = 
  \sum_{i=-\lfloor r\sqrt{d}/\delta \rfloor}^{-1}M^{d-1}_\delta(r_i) + M^{d-1}_\delta(r) + \sum_{i=1}^{\lfloor r\sqrt{d}/\delta \rfloor}M^{d-1}_\delta(r_i) \leq M^d_\delta(r) .
\]
\fi
\end{proof}

\begin{corollary}\label{cor:gridnetsize}
$\left|\mathcal{N}^d_\delta(x,r)\right| \leq M^d_\delta(r) = O(r/\delta)^d$.
\end{corollary}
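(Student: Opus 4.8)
The plan is to bound $|\mathcal{N}^d_\delta(x,r)|$ by induction on the dimension $d$, showing $|\mathcal{N}^d_\delta(x,r)| \leq M^d_\delta(r)$; the asymptotic estimate $M^d_\delta(r) = O(r/\delta)^d$ is then immediate from the definition of $M^d_\delta(r)$ as $\lceil (4\sqrt\pi\cdot r/\delta)^d\rceil$. First I would note that translation does not change the size of the net, so we may assume $x = \mathbf 0$ and write $\mathcal{N}^d_\delta(r) := \mathcal{N}^d_\delta(\mathbf0,r)$. For the base case $d=1$, the net is a set of consecutive multiples of $\delta/\sqrt1 = \delta$ lying in $[-r,r]$, so it has at most $2r/\delta + 1$ points, which is at most $M^1_\delta(r) = \lceil 4\sqrt\pi\cdot r/\delta\rceil$ (here $4\sqrt\pi > 2$, and the ceiling absorbs the additive constant for the regime where $r/\delta$ is not too small; for very small $r/\delta$ one checks the net has at most one point).

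For the inductive step, I would slice the $d$-dimensional grid-ball along the last coordinate axis. A grid point in $\mathcal{G}^d_\delta$ has last coordinate of the form $\frac{\delta}{\sqrt d} i$ for an integer $i$, and it lies in $\mathcal{B}^d(\mathbf0,r)$ only if $|\frac{\delta}{\sqrt d} i| \leq r$, i.e.\ $|i| \leq r\sqrt d/\delta$. For each such $i$, the slice $\{v \in \mathcal{N}^d_\delta(r) : v_d = \frac{\delta}{\sqrt d} i\}$ is, after dropping the last coordinate, exactly a $(d-1)$-dimensional grid-ball of radius $r_i := \sqrt{r^2 - (\frac{\delta}{\sqrt d}i)^2}$ with grid side $\delta/\sqrt d$ — but the grid $\mathcal{G}^{d-1}_\delta$ has side $\delta/\sqrt{d-1}$, which is slightly coarser, so the slice injects into $\mathcal{N}^{d-1}_\delta(r_i')$ for a suitable $r_i' \geq r_i$; alternatively, and more cleanly, I would observe that the slice has size at most $M^{d-1}_\delta(r_i)$ directly by redoing the $(d-1)$-dimensional counting argument with the finer side length, since $M^{d-1}_\delta$ was defined to be a valid bound for that finer grid as well. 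Summing over all valid $i$ gives
\[
  |\mathcal{N}^d_\delta(r)| \leq \sum_{i=-\lfloor r\sqrt d/\delta\rfloor}^{\lfloor r\sqrt d/\delta\rfloor} M^{d-1}_\delta(r_i),
\]
and \Cref{clm:sumofballs} bounds the right-hand side by $M^d_\delta(r)$, completing the induction.

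The work in \Cref{clm:sumofballs} is essentially a discretized version of the computation that the volume of the $d$-ball equals $\int$ (volume of $(d-1)$-ball slices), and the key analytic input is \Cref{fct:discrete_ball}, which estimates $\sum_{i=1}^m (m^2 - i^2)^k$ by comparison with the integral $\int_0^1 (1-x^2)^k\,dx = \frac{\sqrt\pi}{2}\cdot\frac{\Gamma(k+1)}{\Gamma(k+1.5)} \leq \sqrt{\pi/(2k)}$ (this last inequality being a standard Gamma-function bound). I expect the main obstacle to be purely bookkeeping: making sure the grid-side mismatch between dimension $d$ (side $\delta/\sqrt d$) and dimension $d-1$ (side $\delta/\sqrt{d-1}$) is handled without losing the clean constant $4\sqrt\pi$, and verifying that the rounding/ceiling constants in $M^d_\delta$ are generous enough to absorb the additive $\lfloor r\sqrt d/\delta\rfloor$ term that appears when passing from $M^{d-1}_\delta(r_i)$ (which contains a ceiling) to the exact expression. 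This is exactly the slack provided by the chain of inequalities in \Cref{clm:sumofballs}, where one bounds the sum of ceilings by an additive term plus a sum of pure powers, replaces $\lfloor\cdot\rfloor$ by $\lceil\cdot\rceil$ inside the power to apply \Cref{fct:discrete_ball}, and finally checks that the resulting constant is at most $\frac13 (4\sqrt\pi\cdot r/\delta)^d$, so that the positive-$i$ slices, the negative-$i$ slices, and the $i=0$ slice together fit under $M^d_\delta(r)$.
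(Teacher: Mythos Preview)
Your approach is the paper's: induct on $d$, slice along one coordinate, bound each slice by the inductive hypothesis, and sum via \cref{clm:sumofballs}. (The paper slices on the first coordinate rather than the last, which is immaterial.)

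You are right to flag the grid-side mismatch --- when one coordinate of $\mathcal{G}^d_\delta$ is fixed and dropped, the remaining $d-1$ coordinates sit on a grid of side $\delta/\sqrt{d}$, not $\delta/\sqrt{d-1}$, so the slice is not literally $\mathcal{N}^{d-1}_\delta(x_{-1},r_i)$; the paper in fact glosses over this and writes exactly that equality. But neither of your two fixes works as stated: the finer grid has \emph{more} points, so it neither injects into the coarser $\mathcal{N}^{d-1}_\delta$ by inclusion nor is it automatically bounded by $M^{d-1}_\delta$ (the inductive hypothesis speaks only about $\mathcal{G}^{d-1}_\delta$). The clean repair is to rescale by $\sqrt{d/(d-1)}$, which carries the slice bijectively onto $\mathcal{N}^{d-1}_\delta(\cdot,\, r_i\sqrt{d/(d-1)})$; induction then gives the bound $M^{d-1}_\delta\bigl(r_i\sqrt{d/(d-1)}\bigr)$, which differs from $M^{d-1}_\delta(r_i)$ by a factor at most $(d/(d-1))^{(d-1)/2}<\sqrt{e}$. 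Absorbing this factor requires slightly inflating the constant $4\sqrt\pi$ in the definition of $M^d_\delta$ so that the $\tfrac13$ slack in \cref{clm:sumofballs} survives; the $O(r/\delta)^d$ conclusion is of course unaffected.
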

\begin{proof}
By induction on $d$. In the base case $d=1$, clearly $\mathcal{N}^1_\delta(x,r) \leq \lceil r/\delta \rceil$ and the bound holds. For $d>1$, assume w.l.o.g.~$x_1=0$, let $x_{-1}$ denote the projection of $x$ on its $d-1$ last coordinates. It is a simple observation that for any $\alpha\in[-r,r]$, the points $y\in\mathcal{B}^d_\delta(x,r)$ with $y_1=\alpha$ form a $(d-1)$-dimensional ball of radius $\sqrt{r^2-\alpha^2}$. Therefore, by grouping the points in $\mathcal{N}^d_\delta(x,r)$ by their first coordinate value, we can write the grid net as a disjoint union of grid nets in $d-1$ dimensions. More precisely, denoting $r_i:=\sqrt{r^2 - (\tfrac{\delta}{\sqrt{d}}i)^2}$,
\[
  \mathcal{N}^d_\delta(x,r) = \bigcup_{i=-\lfloor r\sqrt{d}/\delta \rfloor}^{\lfloor r\sqrt{d}/\delta \rfloor}\{(\tfrac{\delta}{\sqrt{d}}i,y):y\in\mathcal{N}^{d-1}_\delta(x_{-1},r_i) \}.
\]
Then by induction,
\ifprocs
\begin{align*}
  \left|\mathcal{N}^d_\delta(x,r)\right| &\leq
  \sum_{i=-\lfloor r\sqrt{d}/\delta \rfloor}^{\lfloor r\sqrt{d}/\delta \rfloor}\left|\mathcal{N}^{d-1}_\delta(x_{-1}, r_i)\right| \\
  &\leq \sum_{i=-\lfloor r\sqrt{d}/\delta \rfloor}^{\lfloor r\sqrt{d}/\delta \rfloor}M^{d-1}_\delta(r_i) , 
\end{align*}
\else
\[
  \left|\mathcal{N}^d_\delta(x,r)\right| \leq
  \sum_{i=-\lfloor r\sqrt{d}/\delta \rfloor}^{\lfloor r\sqrt{d}/\delta \rfloor}\left|\mathcal{N}^{d-1}_\delta(x_{-1}, r_i)\right| \leq
  \sum_{i=-\lfloor r\sqrt{d}/\delta \rfloor}^{\lfloor r\sqrt{d}/\delta \rfloor}M^{d-1}_\delta(r_i) , 
\]
\fi
and the bound follows from \cref{clm:sumofballs}.
\ifprocs
\qedsymbol
\fi
\end{proof}

\paragraph{Encoding and decoding.} We map vectors in $\mathcal{N}_\delta^d(x,r)$ to integers in the range $1,\ldots,M_\delta^d(r)$, or equivalently bitstring of length $\log(M_\delta^d(r))$, as follows. For $i=-\lfloor \frac{r\sqrt{d}}{\delta} \rfloor,...,\lfloor \frac{r\sqrt{d}}{\delta} \rfloor$, we partition the range to segments of lengths $M_\delta^{d-1}(r_i)$; \cref{clm:sumofballs} ensures the sum of segments does not exceed $M_\delta^d(r)$. We group the vectors in the net by their first coordinate value, setting $N_i=\{\eta\in\mathcal{N}_\delta^d(x,r):\eta_1=\frac{\delta}{\sqrt d}i\}$, and we map $N_i$ to the segment of length $M_\delta^{d-1}(r_i)$; \cref{cor:gridnetsize} ensures the segment is large enough. Within each segment, the mapping is defined recursively, by recalling that $N_i$ projected on the last $d-1$ coordinates is the net $\mathcal N^{d-1}_\delta(x_{-1},r_i)$.

In order to encode a given vector, we need to compute the $\frac{2r\sqrt{d}}{\delta}$ segment sizes $M_\delta^{d-1}(r_i)$, pick a segment according to the first coordinate, and recurse on the remaining coordinates. The total encoding time is hence $O(d\cdot\frac{r\sqrt{d}}{\delta})$. In order to decode a given encoding, we again need to compute the segment sizes in order to the determine the first coordinate, and then recurse on offset within the current segment. The decoding time is again $O(d^{1.5}r/\delta)$.

\end{document}